
\documentclass[12pt]{article}
\usepackage[T1]{fontenc}
\usepackage[latin9]{inputenc}
\usepackage{color}
\usepackage{prettyref}
\usepackage{float}
\usepackage{amsthm}
\usepackage{amsmath}
\usepackage{amssymb}
\usepackage[authoryear]{natbib}
\usepackage[unicode=true,pdfusetitle,
 bookmarks=true,bookmarksnumbered=true,bookmarksopen=true,bookmarksopenlevel=3,
 breaklinks=false,pdfborder={0 0 1},backref=false,colorlinks=false]
 {hyperref}

\makeatletter

\providecommand{\tabularnewline}{\\}

\newenvironment{lyxcode}
{\par\begin{list}{}{
\setlength{\rightmargin}{\leftmargin}
\setlength{\listparindent}{0pt}
\raggedright
\setlength{\itemsep}{0pt}
\setlength{\parsep}{0pt}
\normalfont\ttfamily}%
 \item[]}
{\end{list}}
 \theoremstyle{definition}
 \newtheorem*{defn*}{\protect\definitionname}
\theoremstyle{plain}
\newtheorem{thm}{\protect\theoremname}

\usepackage[all]{xy}
\usepackage{bbm}
\usepackage{yfonts}
\usepackage{amsfonts}
\usepackage{MnSymbol}
\usepackage{rotating}

\usepackage[samesize]{cancel}
\numberwithin{equation}{section}
\bibpunct{(}{)}{;}{a}{,}{,}

\newcommand{\xyR}[1]{%
\xydef@\xymatrixrowsep@{#1}}

\newcommand{\xyC}[1]{%
\xydef@\xymatrixcolsep@{#1}}

\makeatother

  \providecommand{\definitionname}{Definition}
\providecommand{\theoremname}{Theorem}

\begin{document}

\title{Multideviations: The hidden structure of Bell's theorems}

\author{Brandon Fogel}
\maketitle
\begin{abstract}
Specification of the strongest possible Bell inequalities for arbitrarily
complicated physical scenarios---any number of observers choosing
between any number of observables with any number of possible outcomes---is
currently an open problem. Here I provide a new set of tools, which
I refer to as ``multideviations'', for finding and analyzing these
inequalities for the fully general case. In Part I, I introduce the
multideviation framework and then use it to prove an important theorem:
the Bell distributions can be generated from the set of joint distributions
over all observables by deeming specific degrees of freedom unobservable.
In Part II, I show how the theorem provides a new method for finding
tight Bell inequalities. I then specify a set of new tight Bell inequalities
for arbitrary event spaces---the ``even/odd'' inequalities---which
have a straightforward interpretation when expressed in terms of multideviations.
The even/odd inequalities concern degrees of freedom that are independent
of those involved in parameter independence, raising the possibility
of a new Bell's theorem with stronger philosophical implications.
Also, contrary to expectations, the violation of the inequalities
by quantum mechanics increases in size with the number of systems.\end{abstract}
\begin{lyxcode}
\global\long\def\bfempty{OOPS}

\global\long\def\none{}

\global\long\def\bfIF#1#2#3#4{\ifx#1#2{#3}\else{#4}\fi}

\global\long\def\function#1#2{#1\bfIF{#2}{\none}{}{\left(#2\right)}}

\global\long\def\itp#1{\widetilde{#1}}
\global\long\def\itpx{\itp x}
\global\long\def\itpxfull{\itp x_{\itp p}}
\global\long\def\itpy{\itp y}
\global\long\def\itpp{\itp p}
\global\long\def\itpq{\itp q}
\global\long\def\itpgamma{\itp{\gamma}}
\global\long\def\itpalpha{\itp{\alpha}}
\global\long\def\itpeta{\itp{\eta}}
\global\long\def\cminus{\circleddash}
\global\long\def\seq{\subseteq}
\global\long\def\xor{\;\veebar\;}

\global\long\def\inter{\cap}
\global\long\def\un{\cup}

\global\long\def\dsR{\mathbb{R}}

\global\long\def\scL{\mathcal{L}}
\global\long\def\scP{\mathcal{P}}
\global\long\def\scH{\mathcal{H}}
\global\long\def\dsTwo{\mathbbm{2}}
\global\long\def\dsOne{\mathbbm{1}}
\global\long\def\pioneerset{\mathfrak{X}}
\global\long\def\scS{\mathcal{S}}

\global\long\def\SampleFS{\Pi N_{V}}
\global\long\def\SampleES{\left(V,M,N\right)}
\global\long\def\unionM{\cup M}
\global\long\def\SampleESFull{\left(V,M_{V},N_{\cup M_{V}}\right)}
\global\long\def\SampleJMS{\Pi M_{V}}
\global\long\def\SampleJOS{\Pi N_{\itpp}}
\global\long\def\SampleOmni{\Pi N_{\unionM}}
\global\long\def\SampleOmniU{SOU}
\global\long\def\sos{--sos--}

\end{lyxcode}
\global\long\def\lattice#1{\function{\scL}{#1}}

\begin{lyxcode}
\global\long\def\multideviation#1#2#3#4{\function{Q_{#2}^{#1\bfIF{#4}{\none}{}{,#4}}}{#3}}

\global\long\def\MDvector#1#2#3{\function{\vec{q}_{#2}^{\,#1}}{#3}}

\global\long\def\lmultideviation#1#2#3#4{\function{W_{#2}^{#1\bfIF{#4}{\none}{}{,#4}}}{#3}}

\global\long\def\msf#1#2#3{\function{q_{#2}^{#1}}{#3}}
\global\long\def\lmsf#1#2#3{\function{w_{#2}^{#1}}{#3}}

\end{lyxcode}
\global\long\def\simplemult#1{\multideviation{#1}{\none}{\none}{\none}}
\global\long\def\nnsimplemult#1#2{\overline{Q}_{#2}^{#1}}

\begin{lyxcode}
\global\long\def\mcprobability#1#2#3{\function{P_{#2}^{#1}}{#3}}

\global\long\def\powerset#1{\function{\scP}{#1}}
\global\long\def\singletonset#1{\function{\scS}{#1}}

\global\long\def\ddsymbol{\mathfrak{K}}

\global\long\def\discretediff#1#2#3{\function{\ddsymbol}{#1\bfIF{#2}{\none}{}{,#2},#3}}

\global\long\def\bra#1{\left<#1\right|}
\global\long\def\ket#1{\left|#1\right>}
\global\long\def\braket#1#2#3{\bra{#1}#2\ket{#3}}

\end{lyxcode}

\section{Introduction}

In its original form, Bell's theorem described a rather simple physical
scenario: two observers each choosing between two possible measurements
with two possible outcomes each (see \citealt{Bell64,Bell1966}).
The derived empirical limits, the Bell inequalities, were eventually
given a complete description (see \citealt{ClauserHorne1974,Fine1982}).
Subsequent attempts to generalize the theorem to more complicated
scenarios have yielded some notable results, but a similarly complete
and systematic treatment has not yet been achieved.%
\footnote{Notable early attempts include \citet{Svetlichny:1987ui}, \citet{Mermin:1990kc},
and \citet{GHZS:1990}, which provide derivations of empirical limitations
for systems wth three or four particles. \citet{Peres99} provides
a general method for deriving inequalities for higher-dimensional
systems, although few inequalities are actually presented. \citet{PitwoskySvozil2001}
provide a complete list of tight Bell inequalities for three observers
choosing between two binary observables, and \citet{WernerWolf2001}
provide a large class of Bell inequalities, many of which are not
tight (i.e., maximally restrictive). Other notable partial results
include \citet{CollinsGisinEtAl:NBodySeparability:2002}, \citet{ZukowskiBrukner:2002},
\citet{Uffink:2002fk}, and, more recently \citealt{BancalGisin:2011}. %
} The primary obstacle has been the computational complexity of the
problem, which grows exponentially with each of the parameters, particularly
with the number of observers.

The present paper takes a step toward taming that complexity and,
in so doing, exposes some of the deeper structure underlying Bell's
theorems.

The paper is organized into two parts. Part I comprises sections \ref{sec:Conceptual-introduction}-\ref{sec:A-projection-theorem}.
Section \ref{sec:Conceptual-introduction} contains non-technical
presentations of the main results of Part I. Section \ref{sec:Multideviations}
introduces a set of new mathematical tools, which I dub ``multideviations'':
correlation functions that decompose joint probability distributions
into independent degrees of freedom for each subset of observers.
Because the tools have a generality beyond the application to Bell's
theorem, I provide a systematic, if abbreviated, treatment.

Section \ref{sec:Multiple-context-event-spaces} shows how to apply
the multideviation framework to the distributions used in Bell's theorems.
Section \ref{sec:A-projection-theorem} contains an important theorem:
the distributions obeying the Bell inequalities are precisely those
generated by considering specific multideviation degrees of freedom
inaccessible in joint distributions over all observables.

Part II comprises sections \ref{sec:Method-for-finding-BIs}-\ref{sec:Violations in QM}.
Section \ref{sec:Method-for-finding-BIs} uses the new theorem, along
with a bit of matroid theory, a well-established branch of combinatorics,
to outline a new method for finding tight Bell inequalities%
\footnote{A tight Bell inequality is a Bell inequality that is extremal in the
sense that it cannot be written as a linear combination of other Bell
inequalities.\label{fn:Tight-Bell-Inequalities}%
} for arbitrary physical scenarios.

In section \ref{sec:Pioneer-sets}, I use the new method to provide
a large class of solutions for the general case---tight Bell inequalities
for any number of observers each choosing between any number of observables
each with any number of possible outcomes. In section \ref{sec:Intepretation-of-the-even/odd-inequalities},
I provide a convenient conceptual and graphical interpretation of
an important subset of the new inequalities, which I refer to as the
``even/odd inequalities''. In section \ref{sec:Philosophical-importance-of-EO-inqs},
I outline an important feature of these inequalities: they concern
degrees of freedom that are independent of those involved in parameter
independence. This means they could theoretically be derived without
that condition or anything equivalent, thus allowing for a sharper
philosophical conclusion than is achieved with the full complement
of Bell inequalities.

In section \ref{sec:Violations in QM}, I show that quantum mechanics
violates a subset of these inequalities with particularly simple form.
One somewhat surprising result is that the size of the violation increases
with the number of observers and rather quickly converges to the theoretical
maximum. If the violation of these inequalities is taken to represent
a peculiarly non-classical effect, then one might have expected the
effect to diminish as the number of systems is increased, which is
often considered a classical limit; yet this is not the case.

\part{Multideviations and the projection theorem}

In this part, I present the multideviation framework, show how it
is applied to Bell's theorem, and then prove an important projection
theorem concerning distributions satisfying the Bell inequalities.
For the reader uninterested in the full technical presentation, I
offer in section \ref{sec:Conceptual-introduction} conceptual presentations
of the main results. Section \ref{sub:A-geometric-prelude---the 2x2 case}
describes the new mathematical tools, and section \ref{sub:A-conceptual-prelude-the-projection-theorem}
describes how these tools provide a unique and fundamental perspective
on the mathematical origin of the Bell inequalities.

\section{Conceptual presentations\label{sec:Conceptual-introduction}}

\subsection{Multideviations---A geometric prelude\label{sub:A-geometric-prelude---the 2x2 case}}

Consider two observers, $A$ and $B$, each measuring an observable
with two possible outcomes, so that there are four possible joint
outcomes: $\{1_{A}1_{B},$\\
$1_{A}2_{B},2_{A}1_{B},2_{A}2_{B}\}$. The set of possible probability
distributions over these outcomes has three degrees of freedom: one
for each outcome minus one for the constraint that the probabilities
sum to 1. The set can be represented by a tetrahedron (see fig. \ref{fig:Tetrahedron102-1}),
where the vertices and facets have convenient interpretations. 
\begin{figure}
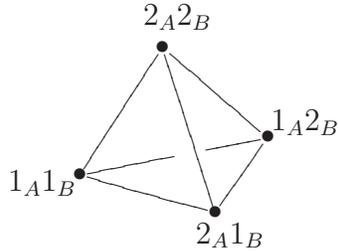

\begin{centering}
\[ 
\xy 
(-10,-5)*{\bullet}="1";
(8,-10)*{\bullet}="2";
(15,0)*{\bullet}="3";
(1,12)*{\bullet}="4";
"1";"2"**\dir{-};
"2";"3"**\dir{-};
"4";"3"**\dir{-};
"1";"4"**\dir{-};
"4";"2"**\dir{-};
{\ar@{-}|>>>>>>>>>>{\hole\hole}"1";"3"};
(-15,-6)*{1_{A}1_{B}};
(10,-13)*{2_{A}1_{B}};
(20,2)*{1_{A}2_{B}};
(3,16)*{2_{A}2_{B}};
\endxy 
\]  
\par\end{centering}

\caption{Geometric representation of a set of probability distributions for
the 2x2 case. Each vertex represents the distribution where that outcome
has probability 1. Each facet represents the distributions where the
opposite vertex has probability 0. All other distributions correspond
to points inside the tetrahedron.}
\label{fig:Tetrahedron102-1} 
\end{figure}

The three degrees of freedom in the set of distributions can be broken
up in a convenient way. One degree of freedom can be used to describe
the probability that $A$ will observe outcome $1_{A}$ or $2_{A}$---the
marginal probabilities for $A$. Only one deegree of freedom is needed
because the probabilities for those outcomes must sum to 1. Another
degree of freedom can be used to describe the marginal probabilities
for $B$. These two degrees of freedom are independent of one another;
they correspond to different subspaces in the vector space containing
the tetrahedron. In fact, the subspaces are orthogonal, and if we
project the tetrahedron into the subspace containing both of them,
it looks like a square (see fig. \ref{fig:Tetrahedron projected}).
\begin{figure}
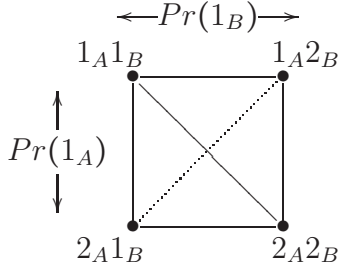

\begin{centering}
\[
\xy 
(10,10)*{\bullet}="12"; 
(10,-10)*{\bullet}="22"; 
(-10,10)*{\bullet}="11"; 
(-10,-10)*{\bullet}="21"; 
(-20,0)*{Pr\!\left(1_A\right)}="PrA"; 
(0,18)*{Pr\!\left(1_B\right)}="PrB"; 
(12,18)*{}="bRightDest"; 
(-12,18)*{}="bLeftDest"; 
(-20,8)*{}="aUpDest"; 
(-20,-8)*{}="aDownDest"; 
(-20,3)*{}="aUpSource"; 
(-20,-3)*{}="aDownSource"; 
{\ar@{->}"aUpSource";"aUpDest"}; {\ar@{->}"aDownSource";"aDownDest"}; 
{\ar@{->}"PrB";"bRightDest"}; 
{\ar@{->}"PrB";"bLeftDest"}; 
"12";"22"**\dir{-}; 
"22";"11"**\dir{-}; 
"21";"11"**\dir{-}; 
"12";"21"**\dir{..}; 
"21";"22"**\dir{-}; 
"12";"11"**\dir{-}; 
(-13,13)*{1_{A}1_{B}}; 
(-13,-13)*{2_{A}1_{B}}; 
(13,13)*{1_{A}2_{B}}; 
(13,-13)*{2_{A}2_{B}}; 
\endxy 
\]
\par\end{centering}

\caption{Projection of the set of 2x2 probability distributions into the marginal
plane. Four of the tetrahedron's edges project to edges in the square,
while two become internal lines (the diagonals). The solid diagonal
line represents the closer of those two edges, and the dotted line
represents the edge on the opposite side of the tetrahedron.}
\label{fig:Tetrahedron projected} 
\end{figure}

The remaining degree of freedom represents information that is not
contained in the marginal probabilities; that is, it concerns only
the correlation between their measurements. 

Although the three degrees of freedom are linearly independent, they
are related to one another by the shape of the tetrahedron. Consider
figure \ref{fig:Tetrahedron projected}, which shows the tetrahedron
from a particular angle. The center of the tetrahedron projects to
the center of the square, between the two edges that project internally
(the diagonal lines). If one starts in the center of the tetrahedron
and then moves out of the page, one will eventually hit the edge connecting
$1_{A}1_{B}$ and $2_{A}2_{B}$. Along this edge, $\function{Pr}{1_{A}}=\function{Pr}{1_{B}}$
and, consequently, $\function{Pr}{2_{A}}=\function{Pr}{2_{B}}$. That
is, the correlation degree of freedom is maximized, and the effect
is that observers $A$ and $B$ will always get the same result. Conversely,
if one starts in the center and moves directly into the page, one
will hit the edge on the other side of the tetrahedron, where the
correlation degree of freedom is minimized and the outcomes are perfectly
anti-correlated (this is the geometric interpretation of equation
\ref{eq:Qmaxmin} in section \ref{sub:Binaries-Inequality-constraints}).

These degrees of freedom correspond to what I have dubbed ``multideviations'':
special functions that decompose joint probability distributions into
linearly independent correlation degrees of freedom. Multidevations
are given systematic treatment in section \ref{sec:Multideviations}.
The multideviation decomposition is of fundamental importance to Bell's
Theorem (see section \ref{sec:A-projection-theorem}), and it provides
a new method for the determination of Bell inequalities for arbitrary
event spaces (see sections \ref{sec:Method-for-finding-BIs} and \ref{sec:Pioneer-sets}).

\subsection{The projection theorem---a conceptual prelude\label{sub:A-conceptual-prelude-the-projection-theorem}}

The projection theorem of section \ref{sec:A-projection-theorem}
says that the distributions described by Bell's theorem can be found
by two different methods. The first is familiar: 1) a set of joint
measurement contexts is formed by allowing each observer to choose
from a set of mutually exclusive observables, 2) a different probability
distribution is specified for each joint measurement context, and
3) certain conditions (often of ontological importance) are imposed
on them. The second is new: 1) a single joint measurement context
is formed by supposing that all observables are measured together,
2) a single probability distribution is specified for that context,
and 3) certain degrees of freedom are deemed unobservable. The projection
theorem shows that the two methods produce the same distributions.

We will consider the two methods schematically for the simplest case:
two observers each choosing between two binary observables. In the
first method, observer $A$ chooses between observables $1$ and $2$,
and observer $B$ chooses between observables $3$ and $4$. There
are thus four measurement contexts: $\left\{ 13,23,14,24\right\} $.
We assign a probability distribution, $P_{ij}$, to each context,
and using the multideviation framework described in the previous section,
we can break each distribution into three degrees of freedom: $\multideviation{\left\{ i,j\right\} }{ij}{\none}{\none},\multideviation{\left\{ i\right\} }{ij}{\none}{\none},\multideviation{\left\{ j\right\} }{ij}{\none}{\none}$.
This is depicted in figure \ref{fig:Degrees-of-freedom-for-mcd-in-2x2x2}.
\begin{figure}
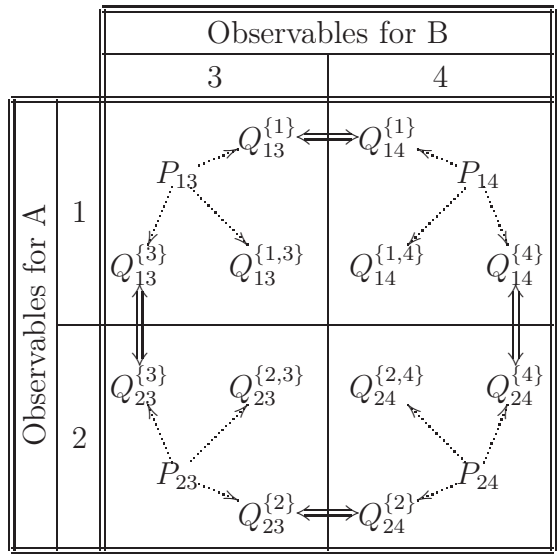

\begin{centering}
\begin{center}
\[
\xy 
(0,0)="O";
(30,0)="X";
(0,30)="Y";
(15,0)="X2";
(0,15)="Y2";
{"O"+"X"};{"O"-"X"}**\dir{-}; 
{"O"+"Y"};{"O"-"Y"}**\dir{-}; 
{"O"+"X"+"Y"};{"O"+"X"-"Y"}**\dir{=}; 
{"O"-"X"-"Y"};{"O"+"X"-"Y"}**\dir{=}; 
{"O"-"X"-"Y"};{"O"-"X"+"Y"}**\dir{=}; 
{"O"+"X"+"Y"};{"O"-"X"+"Y"}**\dir{=}; 
%
(6,0)="LabelX";
(0,6)="LabelY";
(3,0)="LabelX2";
(0,3)="LabelY2";
{"O"-"X"-"Y"-"LabelX"-"LabelX"};{"O"-"X"-"Y"}**\dir{=}; 
{"O"-"X"+"Y"-"LabelX"-"LabelX"};{"O"-"X"+"Y"}**\dir{=}; 
{"O"-"X"-"LabelX"};{"O"-"X"}**\dir{-}; 
{"O"-"X"+"Y"+"LabelY"+"LabelY"};{"O"-"X"+"Y"}**\dir{=}; 
{"O"+"X"+"Y"+"LabelY"+"LabelY"};{"O"+"X"+"Y"}**\dir{=}; 
{"O"+"Y"+"LabelY"};{"O"+"Y"}**\dir{-}; 
{"O"-"X"-"Y"-"LabelX"};{"O"-"X"+"Y"-"LabelX"}**\dir{-}; 
{"O"-"X"+"Y"+"LabelY"};{"O"+"X"+"Y"+"LabelY"}**\dir{-}; 
{"O"-"X"-"Y"-"LabelX"-"LabelX"};{"O"-"X"+"Y"-"LabelX"-"LabelX"}**\dir{=}; 
{"O"-"X"+"Y"+"LabelY"+"LabelY"};{"O"+"X"+"Y"+"LabelY"+"LabelY"}**\dir{=}; 
{"O"-"X"+"Y2"-"LabelX2"}*{1};
{"O"-"X"-"Y2"-"LabelX2"}*{2};
{"O"+"Y"-"X2"+"LabelY2"}*{3};
{"O"+"Y"+"X2"+"LabelY2"}*{4};
{"O"+"Y"+"LabelY"+"LabelY2"}*{\textrm{Observables for B}};
{"O"-"X"-"LabelX"-"LabelX2"}*{\begin{sideways}Observables for A\end{sideways}};
%
%
(20,0)="PX";
(0,20)="PY";
(8,0)="QtopX";
(0,8)="QtopY";
(8,0)="QAX";
(0,25)="QAY";
(25,0)="QBX";
(0,8)="QBY";
{"O"-"QtopX"+"QtopY"}*{Q^{\left\{ 1,3\right\}}_{13}}="Q13"; 
{"O"-"QAX"+"QAY"}*{Q^{\left\{ 1\right\}}_{13}}="Q131"; 
{"O"-"QBX"+"QBY"}*{Q^{\left\{ 3\right\}}_{13}}="Q133"; 
{"O"-"PX"+"PY"}*{P_{13}}="P13"; 
{\ar@{.>}"P13";"Q13"}; 
{\ar@{.>}"P13";"Q131"}; 
{\ar@{.>}"P13";"Q133"}; 
{"O"+"QtopX"+"QtopY"}*{Q^{\left\{ 1,4\right\}}_{14}}="Q14"; 
{"O"+"QAX"+"QAY"}*{Q^{\left\{ 1\right\}}_{14}}="Q141"; 
{"O"+"QBX"+"QBY"}*{Q^{\left\{ 4\right\}}_{14}}="Q144"; 
{"O"+"PX"+"PY"}*{P_{14}}="P14"; 
{\ar@{.>}"P14";"Q14"}; 
{\ar@{.>}"P14";"Q141"}; 
{\ar@{.>}"P14";"Q144"}; 
{"O"-"QtopX"-"QtopY"}*{Q^{\left\{ 2,3\right\}}_{23}}="Q23"; 
{"O"-"QAX"-"QAY"}*{Q^{\left\{ 2\right\}}_{23}}="Q232"; 
{"O"-"QBX"-"QBY"}*{Q^{\left\{ 3\right\}}_{23}}="Q233"; 
{"O"-"PX"-"PY"}*{P_{23}}="P23"; 
{\ar@{.>}"P23";"Q23"}; 
{\ar@{.>}"P23";"Q232"}; 
{\ar@{.>}"P23";"Q233"}; 
{"O"+"QtopX"-"QtopY"}*{Q^{\left\{ 2,4\right\}}_{24}}="Q24"; 
{"O"+"QAX"-"QAY"}*{Q^{\left\{ 2\right\}}_{24}}="Q242"; 
{"O"+"QBX"-"QBY"}*{Q^{\left\{ 4\right\}}_{24}}="Q244"; 
{"O"+"PX"-"PY"}*{P_{24}}="P24"; 
{\ar@{.>}"P24";"Q24"}; 
{\ar@{.>}"P24";"Q242"}; 
{\ar@{.>}"P24";"Q244"}; 
{\ar@{<=>}"Q141";"Q131"}; 
{\ar@{<=>}"Q242";"Q232"}; 
{\ar@{<=>}"Q133";"Q233"}; 
{\ar@{<=>}"Q144";"Q244"}; 
%
%
%
\endxy 
\]
\par\end{center}
\par\end{centering}

\caption{\label{fig:Degrees-of-freedom-for-mcd-in-2x2x2}Degrees of freedom
for a multiple-context distribution in the 2x2x2 case. There are four
measurement contexts and thus four probability distributions. Each
distribution is broken into three multideviation degrees of freedom
(dotted arrows), and parameter independence requires multideviations
of the same order to be the same across measurement contexts (double
arrows).}

\end{figure}
At first, there are 12 total degrees of freedom. Parameter independence
requires that the marginals for one observer be independent of the
choice of observable for the other observer. This is enforced by equating
the $Q_{ij}^{\sigma}$ for different $ij$ but the same $\sigma$,
which removes four degrees of freedom (in the figure, this is indicated
by the double arrows). Enforcing determinism or outcome independence
for the underlying states is not represented so easily, but neither
condition reduces the degrees of freedom of the set of observable
distributions (see section \ref{sub:Multiple-context-multideviations-and-conditions}
for more on all of these conditions).

In the second method, we consider all four observables to be measured
by independent observers; there are thus four observers, $\left\{ 1,2,3,4\right\} $,
and a single probability distribution over 16 possible outcomes. There
will thus be 16 different multideviation degrees of freedom (Fig.
\ref{fig:Multideviations-4-observers-all}).
\begin{figure}[H]
\[
\begin{gathered}\begin{array}{c}
\multideviation{\left\{ 1,2,3,4\right\} }{\none}{\none}{\none}\end{array}\\
\begin{array}{cccc}
\multideviation{\left\{ 1,2,3\right\} }{\none}{\none}{\none} & \multideviation{\left\{ 1,2,4\right\} }{\none}{\none}{\none} & \multideviation{\left\{ 1,3,4\right\} }{\none}{\none}{\none} & \multideviation{\left\{ 2,3,4\right\} }{\none}{\none}{\none}\end{array}\\
\begin{array}{cccccc}
\multideviation{\left\{ 1,2\right\} }{\none}{\none}{\none} & \multideviation{\left\{ 1,3\right\} }{\none}{\none}{\none} & \multideviation{\left\{ 1,4\right\} }{\none}{\none}{\none} & \multideviation{\left\{ 2,3\right\} }{\none}{\none}{\none} & \multideviation{\left\{ 2,4\right\} }{\none}{\none}{\none} & \multideviation{\left\{ 3,4\right\} }{\none}{\none}{\none}\end{array}\\
\begin{array}{cccc}
\multideviation{\left\{ 1\right\} }{\none}{\none}{\none} & \multideviation{\left\{ 2\right\} }{\none}{\none}{\none} & \multideviation{\left\{ 3\right\} }{\none}{\none}{\none} & \multideviation{\left\{ 4\right\} }{\none}{\none}{\none}\end{array}\\
\begin{array}{c}
\multideviation{\emptyset}{\none}{\none}{\none}\end{array}
\end{gathered}
\]

\caption{\label{fig:Multideviations-4-observers-all}Multideviation decomposition
for 4 observers.}
\end{figure}
\begin{figure}[H]
\[
\begin{gathered}\begin{array}{c}
\xcancel{\multideviation{\left\{ 1,2,3,4\right\} }{\none}{\none}{\none}}\end{array}\\
\begin{array}{cccc}
\xcancel{\multideviation{\left\{ 1,2,3\right\} }{\none}{\none}{\none}} & \xcancel{\multideviation{\left\{ 1,2,4\right\} }{\none}{\none}{\none}} & \xcancel{\multideviation{\left\{ 1,3,4\right\} }{\none}{\none}{\none}} & \xcancel{\multideviation{\left\{ 2,3,4\right\} }{\none}{\none}{\none}}\end{array}\\
\begin{array}{cccccc}
\xcancel{\multideviation{\left\{ 1,2\right\} }{\none}{\none}{\none}} & \multideviation{\left\{ 1,3\right\} }{\none}{\none}{\none} & \multideviation{\left\{ 1,4\right\} }{\none}{\none}{\none} & \multideviation{\left\{ 2,3\right\} }{\none}{\none}{\none} & \multideviation{\left\{ 2,4\right\} }{\none}{\none}{\none} & \xcancel{\multideviation{\left\{ 3,4\right\} }{\none}{\none}{\none}}\end{array}\\
\begin{array}{cccc}
\multideviation{\left\{ 1\right\} }{\none}{\none}{\none} & \multideviation{\left\{ 2\right\} }{\none}{\none}{\none} & \multideviation{\left\{ 3\right\} }{\none}{\none}{\none} & \multideviation{\left\{ 4\right\} }{\none}{\none}{\none}\end{array}
\end{gathered}
\]

\caption{\label{fig:Multideviations-4-observers-free}Reduced multideviation
decomposition for 4 observers. }
\end{figure}
One degree of freedom, $\multideviation{\emptyset}{\none}{\none}{\none}=\frac{1}{16}$,
is fixed at the outset. The rest can vary between $\pm\multideviation{\emptyset}{\none}{\none}{\none}$
and are linearly independent, although they are related to each other
via some inequality constraints. Each $Q$ measures a correlation
between a subset of observers that cannot be measured by combinations
of the other $Q^{\sigma}$.

Suppose now that some of the correlation degrees of freedom are considered
unobservable---namely, those in which $1$ and $2$ are involved in
a correlation together or similarly with $3$ and $4$. There are
$7$ such degrees of freedom, and, after eliminating those, we are
left with $8$ (see fig. \ref{fig:Multideviations-4-observers-free}).
A careful comparison of the remaining degrees of freedom in figures
\ref{fig:Degrees-of-freedom-for-mcd-in-2x2x2} and \ref{fig:Multideviations-4-observers-free}
shows that there is a 1-1 correspondence.

This correspondence is essentially the content of the projection theorem
of section \ref{sec:A-projection-theorem}: the distributions that
obey all Bell inequalities are generated by ignoring multideviation
correlations involving mutually exclusive observables in joint distributions
on the set of all observables.

\section{Multideviations\label{sec:Multideviations}}

I will now introduce a new framework for analyzing probability distributions;
the degrees of freedom of the distributions will be reconfigured in
terms of novel correlation functions that I have named ``multideviations''.
These do not appear to exist in the current mathematical literature.
Although I have devised this framework specifically for studying the
distributions in Bell's theorem, it is very general and may find application
elsewhere.

\subsection{A note on notation}

The representation of functions of unknown numbers of variables over
products of arbitrary sets can quickly become unwieldy, and the literature
on Bell's theorem has suffered for lack of an efficient and standard
notation. For this reason, I have developed a new notational framework,
the two basic elements of which are the product set and intuple.%
\footnote{See \citet{Fogel2011} for a more systematic treatment.%
} A \emph{product set} is the Cartesian product of an indexed family
of sets:

\begin{equation}
\Pi A_{B}\equiv\prod_{i\in B}A_{i}
\end{equation}
where $A_{B}\equiv\left\{ A_{i}|i\in B\right\} $. The elements of
product sets are \emph{intuples} (``indexed tuples'') and will be
designated by an overhead tilde: $\itpx_{B}\in\Pi A_{B}$. The intuple
components are indicated in straightforward fashion: $x_{i}\in A_{i}$,
where $i\in B$. Intuples can be thought of as ordered tuples or,
if the components carry the indices of their parent sets, as simple
sets. 

Given some product set $\Pi A_{B}$, there is a product set $\Pi A_{\sigma}$
for every $\sigma\subseteq B$. Likewise, $\itpx_{B}$ defines an
intuple $\itpx_{\sigma}$ for every $\sigma\subseteq B$. 

Summation over product sets can be represented compactly:

\begin{equation}
\sum_{\widetilde{x}_{\sigma}}\function f{\itpx_{B}}\equiv\underbrace{\sum_{x_{i}\in A_{i}}\cdots\sum_{x_{k}\in A_{k}}}_{\sigma=\left\{ i,\ldots,k\right\} }\function f{\itpx_{B}}\label{eq:SummationOverFactorSets}
\end{equation}
where $\sigma=\{i,j,\ldots,k\}\subseteq B$.

\subsection{Motivation}

Consider a set of observers $B$. An observer $i\in B$ performs a
single measurement, where the set of possible outcomes is $A_{i}$.
The possible joint outcomes are given by the product set $\Pi A_{B}$,
and the probability of getting the joint outcome $\itpx_{B}$ is given
by the ordinary distribution $\mcprobability{}{}{\itpx_{B}}$. Of
the many different ways to measure the correlation between outcomes
$x_{i}$ and $x_{j}$ for two different observers $i,j\in B$, the
most common is the covariance:
\begin{equation}
\mcprobability{\left\{ i,j\right\} }{}{x_{i}x_{j}}-\mcprobability{\left\{ i\right\} }{}{x_{i}}\mcprobability{\left\{ j\right\} }{}{x_{j}}
\end{equation}
where $\mcprobability{\sigma}{}{\itpx_{\sigma}}\equiv\sum_{B\backslash\sigma}\mcprobability{}{}{\itpx_{B}}$
is a generalized marginal function.

While the covariance has many useful applications, it has a significant
drawback as a pure measure of correlation: it depends on the absolute
value of the marginal functions $\mcprobability{\left\{ i\right\} }{}{x_{i}}$
and $\mcprobability{\left\{ j\right\} }{}{x_{j}}$, not just their
relation to one another. For example, the covariance is 0 when $\mcprobability{\left\{ i\right\} }{}{x_{i}}=\mcprobability{\left\{ j\right\} }{}{x_{j}}=1$
and $\mcprobability{\left\{ i\right\} }{}{x_{i}}=\mcprobability{\left\{ j\right\} }{}{x_{j}}=0$,
even though these are states of seemingly high correlation.

It would thus be desirable to have a measure of correlation that is
independent of the relevant marginal degrees of freedom. Furthermore,
once we have such a measure for the pairs of observers $\left\{ i,j\right\} $,
$\left\{ i,k\right\} $, and $\left\{ j,k\right\} $, we can find
a measure of correlation for the triple $\left\{ i,j,k\right\} $
that is independent of those as well. Repeating this, we can find
an independent correlation function for each $\sigma\subseteq B$.
It turns out that this demand more or less fixes the form of the functions.

\subsection{Multideviation seed functions}

Given an ordinary distribution $\mcprobability{}{}{\itpx_{B}}$ over
a product set $\Pi A_{B}$, we begin with an arbitrary linear combination
of the elements of the distribution:
\begin{equation}
\multideviation{\sigma}{\none}{\itpx_{\sigma}}{\none}\equiv\sum_{\itpy_{B}}\msf{\sigma}{\none}{\itpx_{\sigma},\itpy_{\sigma}}\mcprobability{}{}{\itpy}
\end{equation}
We want the marginals to be written as sums of these functions for
only the relevant degrees of freedom:
\begin{equation}
\mcprobability{\sigma}{}{\itpy_{\sigma}}=c_{\sigma}\sum_{\rho\in\scP\left(B\right)}\multideviation{\rho}{\none}{\itpy_{\rho}}{\none}
\end{equation}
where $c_{\sigma}$ is some constant. It turns out that if we demand
that $\multideviation{\sigma}{\none}{\itpx_{\sigma}}{\none}$ and
$\multideviation{\rho}{\none}{\itpy_{\rho}}{\none}$ be linearly independent
of one another whenever $\sigma\neq\rho$, and that $c_{B}=1$, then
the form of the seed function $q$ is fixed.
\begin{defn*}[Multideviation seed function]
 Given a product set $\Pi A_{B}$ and a cardinality function $n_{\sigma}\equiv\prod_{i\in\sigma}\left|A_{i}\right|$,
where $\sigma\subseteq B$, 
\begin{equation}
\msf{\sigma}B{\itpx_{\sigma},\itpy_{\sigma}}\equiv\frac{1}{n_{B}}\prod_{i\in\sigma}\left(n_{i}\delta_{x_{i}=y_{i}}-1\right)\label{eq:MSF definition}
\end{equation}
 is a \emph{$\sigma$-order multideviation seed function} ($\sigma$-MSF).
\footnote{Using a generalization of the binomial theorem, the MSF can be put
in an alternate form that is frequently very useful: 
\[
\function{q^{\sigma}}{\itpx_{\sigma},\itpy_{\sigma}}=\sum_{\mu\in\scP\left(\sigma\right)}\frac{(-1)^{\left|\sigma\backslash\mu\right|}}{n_{V\backslash\mu}}\delta_{\itpx,\itpy}^{\mu}
\]
 where $\delta_{\itpx,\itpy}^{\mu}\equiv\prod_{i\in\mu}\delta_{x_{i}=y_{i}}$.%
}
\end{defn*}
The MSFs reproduce the Kronecker delta: 
\begin{equation}
\delta_{\itp x=\itp y}=\sum_{\sigma\in\scP\left(B\right)}\msf{\sigma}B{\itpx_{\sigma},\itpy_{\sigma}}\label{MGFCompleteness}
\end{equation}
This means that the MSFs cover the function space; they can be used
to decompose any function over the given product set, $\Pi A_{B}$.

The MSFs have some especially useful algebraic properties. They are
symmetric in the intuple arguments: 
\begin{equation}
\msf{\sigma}B{\itpx_{\sigma},\itpy_{\sigma}}=\msf{\sigma}B{\itpy_{\sigma},\itpx_{\sigma}}
\end{equation}
 They sum to zero for each argument: 
\begin{equation}
\forall i\in\sigma\left[\sum_{x_{i}}\msf{\sigma}B{\itpx_{\sigma},\itpy_{\sigma}}=0\right]\label{MGFSum}
\end{equation}
 Most important, they are closed and orthogonal under a natural inner
product: 
\begin{equation}
\sum_{\itp y_{B}}\msf{\sigma}B{\itpx_{\sigma},\itpy_{\sigma}}\:\msf{\rho}B{\itpy_{\rho},\itp z_{\rho}}=\delta_{\sigma=\rho}\ \msf{\sigma}B{\itpx_{\sigma},\itp z_{\sigma}}\label{MGFOrtho}
\end{equation}

One can think of MSFs as hypermatrices with a particular matrix-multiplication
structure. \eqref{MGFCompleteness} says that the MSFs span the entire
hypermatrix vector space. \eqref{MGFSum} says that there are linear
dependencies among the hypermatrices. \eqref{MGFOrtho} says that
hypermatrices with different $\sigma$ are orthogonal. Thus, the MSFs
can be used to identify a complete set of orthogonal subspaces of
the vector space (or, equivalently, the function space) defined by
the given product set (see appendix \ref{sec:Geometry and multideviations}
for more on vector spaces and multideviations).

\subsection{Multideviations\label{Multideviations}}

\subsubsection{Abitrary functions}

A multideviation is the portion of a function isolated by an MSF:
\begin{defn*}[Multideviation]
 Given a field-valued function $\function f{\itpx_{B}}$ on a factorizable
set $\Pi A_{B}$, a \emph{$\sigma$-multideviation} is given by 
\begin{equation}
\:\multideviation{\sigma}f{\itpx_{\sigma}}B\equiv\sum_{\itp y_{B}}\function f{\itpy_{B}}\ \msf{\sigma}B{\itpx_{\sigma},\itpy_{\sigma}}\label{eq:Multideviation-definition}
\end{equation}

\end{defn*}
Reference to the index set, $B$, can be omitted if it is clear from
the context. Multideviations inherit the algebraic properties of the
MSFs. Thanks to \eqref{MGFCompleteness}, the multideviations decompose
their generating function: 
\begin{equation}
\function f{\itpx_{B}}=\sum_{\sigma\in\scP\left(B\right)}\multideviation{\sigma}f{\itpx_{\sigma}}{\none}
\end{equation}
Since the decomposition is invertible, it is also unique. The summation
property of the MSFs \eqref{MGFSum} means that the multideviations
are not linearly independent within a given $\sigma$: 
\begin{equation}
\forall i\in V\left[\sum_{x_{i}}\multideviation{\sigma}f{\itpx_{\sigma}}{\none}=0\right]\label{MDSum}
\end{equation}
And the inner product property of the MSFs \eqref{MGFOrtho} means
that the multideviations can be picked out by summation with an MSF:
\begin{equation}
\sum_{\itp x_{B}}\,\multideviation{\sigma}f{\itpx_{\sigma}}B\ \msf{\mu}B{\itpx_{\mu},\itpy_{\mu}}=\delta_{\sigma=\mu}\multideviation{\sigma}f{\itpy_{\sigma}}B\label{eq:MD-orthogonality}
\end{equation}
Multideviations provide an alternate representation of a function;
they take the function's degrees of freedom and redistribute them.

\subsubsection{Ordinary probability distributions}

There are several advantages to representing ordinary probability
distributions in terms of multideviations. One concerns the probability
constraint, $\sum_{\itpx_{B}}\mcprobability{}{}{\itpx_{B}}=1$. In
the natural representation, there is not a particularly easy way to
implement this constraint. However, in the multideviation representation,
it becomes 
\begin{equation}
\multideviation{\emptyset}P{\none}{\none}=\frac{1}{n_{B}}
\end{equation}
A single, specific degree of freedom is fixed; the rest are unaffected.
A second advantage of the multideviation representation is that the
first-order marginal degrees of freedom are isolated from the others: 

\begin{equation}
\multideviation{\left\{ i\right\} }P{x_{i}}{\none}=\frac{1}{n_{B}}\left(n_{\left\{ i\right\} }\mcprobability{\left\{ i\right\} }{}{x_{i}}-1\right)
\end{equation}
The first-order marginals and the first-order multideviations are
distinguished only by an offset and a scaling factor; each fixes the
other.

The 2nd-order multideviation, $\multideviation{\left\{ i,j\right\} }P{\itpx_{\left\{ i,j\right\} }}{\none}$,
is what is left of $\mcprobability{\left\{ i,j\right\} }{}{\itpx_{\left\{ i,j\right\} }}$
after the linear dependences on $\multideviation{\left\{ i\right\} }P{x_{i}}{\none}$
and $\multideviation{\left\{ j\right\} }P{x_{j}}{\none}$ have been
removed. In terms of $P$, this is written most economically as
\begin{equation}
\multideviation{\left\{ i,j\right\} }P{\itpx_{\left\{ i,j\right\} }}{\none}=\frac{1}{n_{B}}\left(n_{\left\{ i,j\right\} }\mcprobability{\left\{ i,j\right\} }{}{\itpx_{\left\{ i,j\right\} }}-n_{\left\{ i\right\} }\mcprobability{\left\{ i\right\} }{}{x_{i}}-n_{\left\{ j\right\} }\mcprobability{\left\{ j\right\} }{}{x_{j}}+1\right)
\end{equation}
If one is not convinced by \eqref{MGFOrtho} that $\multideviation{\left\{ i,j\right\} }P{\itpx_{\left\{ i,j\right\} }}{\none}$
and $\multideviation{\left\{ i\right\} }P{x_{i}}{\none}$ are linearly
independent, one need only see how to modify one without the other.%
\footnote{The substitution $\mcprobability{\left\{ i,j\right\} }{}{x_{i}y_{j}}\rightarrow\mcprobability{\left\{ i,j\right\} }{}{x_{i}y_{j}}+\frac{a}{n_{j}}$
for all $y_{j}$ modifies $\multideviation{\left\{ i\right\} }P{x_{i}}{}$
while leaving $\multideviation{\left\{ i,j\right\} }P{\itpx_{\left\{ i,j\right\} }}{}$
unchanged. The substitution $\mcprobability{\left\{ i,j\right\} }{}{\itpy_{\left\{ i,j\right\} }}\rightarrow\mcprobability{\left\{ i,j\right\} }{}{\itpy_{\left\{ i,j\right\} }}+a\left(-1\right)^{\delta_{y_{i}=x_{i}}}\left(-1\right)^{\delta_{y_{j}=x_{j}}}$
for all $y_{i}$ and $y_{j}$ does the opposite.%
}

The higher-order multideviations have a similar form:
\begin{equation}
\multideviation{\sigma}P{\itpx_{\sigma}}{\none}=\frac{1}{n_{B}}\sum_{\rho\in\scP\left(\sigma\right)}\left(-1\right)^{\left|\sigma\backslash\rho\right|}n_{\rho}\mcprobability{\rho}{}{\itpx_{\rho}}
\end{equation}
One of the novelties with multideviations is that one can consider
correlations between different subsets of observers independently.
For example, it is possible for observers 1 and 2 to have outcomes
that are highly correlated while observers 1, 2, and 3 collectively
do not (and vice-versa).

\subsubsection{Inequality constraints}

The probability axiom, $\mcprobability{}{}{\itpx_{B}}\geq0$, causes
multideviations of different orders to be related by inequality constraints: 

\begin{equation}
\sum_{\sigma\in\scP\left(B\right)}\multideviation{\sigma}P{\itpx_{\sigma}}{\none}\geq0\label{eq:Multideviation-basic-inequalities}
\end{equation}
for all $\itpx_{\sigma}\in\Pi A_{B}$. Because the multideviations
are related by \eqref{MDSum}, and because many of them appear in
more than one inequality, the overall structure of these inequalities
can be rather complicated. The binary case (i.e., $\left|A_{i}\right|=2$
for all $i$) is significantly simpler than the general case (see
section \ref{sub:Binaries-Inequality-constraints}).

In general, the inequalities restrict the multideviations to the following
range:

\begin{equation}
-\frac{1}{n_{min}-1}n_{\sigma}^{*}\multideviation{\emptyset}P{\none}{\none}\leq\multideviation{\sigma}P{\itpx_{\sigma}}{\none}\leq n_{\sigma}^{*}\multideviation{\emptyset}P{\none}{\none}
\end{equation}
where $n_{\sigma}^{*}\equiv\prod_{i\in\sigma}\left(\left|A_{i}\right|-1\right)$
and $n_{min}$ is the size of the smallest outcome set.

\subsection{Binary observables and Boolean multideviations\label{sub:Binary-observables-and-Binary-multideviations}}

When the observables are all binary, i.e., when $\left|A_{i}\right|=2$
for all $i\in B$, the multideviations have a particularly convenient
interpretation. This interpretation also applies to more general event
spaces when they are viewed as binary through the use of modified
multideviation functions, which will be referred to as Boolean multideviations.
These turn out to be important elements for the characterization of
the new Bell inequalities introduced in section \ref{sec:Pioneer-sets}.

\subsubsection{Binary observables\label{sub:Even-and-odd}}

When the outcome sets are all binary, there is only one multideviation
degree of freedom per $\sigma$, thanks to \eqref{MDSum}. All multideviations
can be written in terms of an arbitrarily chosen joint outcome:
\begin{equation}
\multideviation{\sigma}P{\itpx_{\sigma}}B=\left(-1\right)^{\left|\sigma\right|-\left|\itpx_{\sigma}\cap\itp 1_{\sigma}\right|}\multideviation{\sigma}P{\itp 1_{\sigma}}B\label{eq:Qx equals Q1}
\end{equation}
We can thus simplify the notation considerably: $\simplemult{\sigma}\equiv\multideviation{\sigma}P{\itp 1_{\sigma}}B$.

In the binary case, we can rewrite the multideviations to give them
a particularly convenient interpretation:
\begin{equation}
\simplemult{\sigma}=\frac{1}{2^{\left|B\right|}}\left(2\function{Pr}{\textrm{even \# of \ensuremath{\sigma}\ outcomes are }2}-1\right)
\end{equation}
where
\begin{equation}
\function{Pr}{\textrm{even \# of \ensuremath{\sigma}\ outcomes are }2}=\sum_{\itpx_{B}}\mcprobability{}{}{\itpx_{B}}\delta_{\left|\itpx_{\sigma}\cap\itp 2_{\sigma}\right|\textrm{is even}}
\end{equation}

When $\sigma=\left\{ i,j\right\} $, 
\begin{equation}
\function{Pr}{\textrm{even \# of }\left\{ i,j\right\} \textrm{ are }2}=\mcprobability{}{}{1_{i}1_{j}}+\mcprobability{}{}{2_{i}2_{j}}
\end{equation}
which is why $\simplemult{\left\{ i,j\right\} }$ measures the extent
to which $x_{i}$ and $x_{j}$ are perfectly correlated.

However, when $\sigma=\left\{ i,j,k\right\} $, 
\begin{equation}
\function{Pr}{\textrm{even in }\itp 2_{\left\{ i,j,k\right\} }}=\mcprobability{}{}{1_{i}1_{j}1_{k}}+\mcprobability{}{}{2_{i}2_{j}1_{k}}+\mcprobability{}{}{2_{i}1_{j}2_{k}}+\mcprobability{}{}{1_{i}2_{j}2_{k}}
\end{equation}
which means $\simplemult{\left\{ i,j,k\right\} }$ does \emph{not
}measure the extent to which $x_{i}$, $x_{j}$, and $x_{k}$ are
perfectly correlated. Rather, it measures a correlation between them
that cannot be gauged by combinations of pairwise correlations among
them---an irreducible fact concerning all three outcomes together.
Likewise for higher orders of $\sigma$.

\subsubsection{Inequality constraints\label{sub:Binaries-Inequality-constraints}}

The inequality constraints deriving from $\mcprobability{}{}{\itpx_{B}}\geq0$
take a simple form when all observables are binary:
\begin{equation}
\sum_{\sigma\in\powerset B}\left(-1\right)^{\left|\sigma\inter\rho\right|}\simplemult{\sigma}\geq0\label{eq:Multideviation-inequalities}
\end{equation}
where $\rho\seq B$.

Of the many consequences of these inequalities, the following are
the most important:

\begin{equation}
-\simplemult{\emptyset}\leq\simplemult{\sigma}\leq\simplemult{\emptyset}
\end{equation}
and
\begin{equation}
\begin{array}{rl}
\simplemult{\sigma}\textrm{ is maximized} & \longrightarrow\forall_{\rho\subseteq B}\left[\simplemult{\rho}=\simplemult{\sigma\ominus\rho}\right]\\
\\
\simplemult{\sigma}\textrm{ is minimized} & \longrightarrow\forall_{\rho\subseteq B}\left[\simplemult{\rho}=-\simplemult{\sigma\ominus\rho}\right]
\end{array}\label{eq:Qmaxmin}
\end{equation}
In particular, note that maximizing $\simplemult{\left\{ i,j\right\} }$
causes $\simplemult{\left\{ i\right\} }=\simplemult{\left\{ j\right\} }$
(perfect correlation), and minimizing $\simplemult{\left\{ i,j\right\} }$
causes $\simplemult{\left\{ i\right\} }=-\simplemult{\left\{ j\right\} }$
(perfect anti-correlation). This is the formal expression of the geometric
relation noted in the discussion of figure \ref{fig:Tetrahedron projected}
in section \ref{sub:A-geometric-prelude---the 2x2 case}.  

While the behavior of the second-order multideviation is straightforward,
that of the higher-orders is more subtle. One can see clearly from
\eqref{eq:Qmaxmin} that maximizing $\simplemult{\left\{ i,j,k\right\} }$
will \emph{not} cause $\simplemult{\left\{ i\right\} }=\simplemult{\left\{ j\right\} }=\simplemult{\left\{ k\right\} }$.
That would be achieved by maximizing $\simplemult{\left\{ i,j\right\} }$,
$\simplemult{\left\{ i,k\right\} }$, and $\simplemult{\left\{ j,k\right\} }$
separately. Instead, $\simplemult{\left\{ i,j,k\right\} }$ measures
how the outcomes are correlated in a way that cannot be measured by
a combination of the lower order multideviations.

The even/odd interpretation provided in the previous section elucidates
this further. For example, if $\simplemult{\sigma}$ is maximized,
then the set of outcomes for observers $\sigma$ must have an even
number of 2's, and thus the extent to which observers $\rho$ and
$\sigma\cminus\rho$ have even numbers of 2's must be correlated.
If one definitely does not, then the other must not either, and vice-versa.

\subsubsection{Boolean multideviations\label{sub:Boolean-multideviations}}

Event spaces where $\left|A_{i}\right|>2$ for at least one $i\in B$
can be viewed as binary through the use of multideviations over lattice
elements, or what I will refer to as \emph{Boolean multideviations}.
The seed functions for these quantities are 
\begin{equation}
\lmsf{\sigma}B{\itpx_{\sigma},\itpalpha_{\sigma}}\equiv\frac{1}{2^{\left|B\right|}}\prod_{i\in\sigma}\left(2\delta_{x_{i}\in\alpha_{i}}-1\right)
\end{equation}
where $\itpalpha_{\sigma}\in\prod_{i\in\sigma}\lattice{A_{i}}$ is
an intuple over the Boolean lattices of the outcome sets.

The Boolean multideviations are
\begin{equation}
\lmultideviation{\sigma}f{\itpalpha_{\sigma}}B\equiv\sum_{\itp y_{B}}\function f{\itpy_{B}}\ \lmsf{\sigma}B{\itpy_{\sigma},\itpalpha_{\sigma}}\label{eq:Lattice-Multideviation-definition}
\end{equation}
Although the derivation is not straightforward, the following inequalties
hold:
\begin{equation}
\sum_{\sigma\in\powerset B}\left(-1\right)^{\left|\sigma\inter\rho\right|}W^{\sigma}\geq0
\end{equation}
for all $\rho\seq B$. Since this is identical in form to \eqref{eq:Multideviation-inequalities},
all of the results in sections \ref{sub:Binaries-Inequality-constraints}
and \ref{sub:Even-and-odd} can be transferred to the general case
by substituting $Q\rightarrow W$, $1_{i}\rightarrow\alpha_{i}$,
and $2_{i}\rightarrow\alpha_{i}^{c}$.

\section{Multiple-context event spaces and distributions\label{sec:Multiple-context-event-spaces}}

\subsection{Definition}

The distributions used in Bell's theorem are not ordinary probability
distributions, but rather collections of ordinary distributions, one
for each possible joint measurement context. The notational framework
of multiple-context event spaces allows efficient characterization
of such distributions for arbitrarily complex physical scenarios.%
\footnote{See \citet{Fogel2011} for a systematic discussion of multiple-context
event spaces and distributions.%
}

A multiple-context event space\emph{ }is an ordered triple $\SampleESFull$
representing
\begin{enumerate}
\item A set of observers: $V=\left\{ A,B,C,\ldots\right\} $,
\item A set of observables for each observer $i\in V$: $M_{i}=\left\{ \alpha_{i},\beta_{i},\ldots\right\} $,
\item A set of outcomes for each observable $p_{i}\in M_{i}$: $N_{p_{i}}=\{1_{p_{i}},2_{p_{i}},\dots\}$.
\end{enumerate}
Note that the indices in each set do not count the elements of the
set, but rather indicate membership in it. For example, when $p_{i}$
iterates over the elements of $M_{i}$, $i$ is held fixed and is
preserved on the variable $p_{i}$ primarily to emphasize that it
represents an element of $M_{i}$. Likewise for $p_{i}$ in the outcome
$1_{p_{i}}$ or outcome variable $x_{p_{i}}$.

Several spaces of physical importance are defined by the event space:
\begin{enumerate}
\item Joint measurement context space : $\Pi M_{V}\equiv\prod_{i\in V}M_{i}$,
\item Joint outcome space: $\SampleJOS\equiv\prod_{i\in V}N_{p_{i}}$,
\item Omni-joint outcome space: $\SampleOmni\equiv\prod_{i\in V}\prod_{p_{i}\in M_{i}}N_{p_{i}}$.
\end{enumerate}

A multiple-context probability distribution is a collection of ordinary
probability distributions, one for each joint measurement context:
\begin{equation}
P:\Pi M_{V}\rightarrow\left\{ f:\Pi N_{\itpq}\rightarrow\left[0,1\right]\ |\ \itp q\in\Pi M_{V}\right\} 
\end{equation}
where 
\begin{equation}
P_{\itp p}:\Pi N_{\itpp}\rightarrow\left[0,1\right]\label{eq:Multiple_Context_Distribution_Constraint}
\end{equation}
and
\begin{align}
\mcprobability{}{\itpp}{\itpx_{\itpp}} & \geq0\label{WPC}\\
\sum_{\itp x_{\itpp}}\mcprobability{}{\itpp}{\itpx_{\itpp}} & =1\label{SPC}
\end{align}

$\mcprobability{}{\itpp}{\itpx_{\itpp}}$ is the probability for getting
outcome $\itpx_{\itpp}$ in joint measurement context $\itp p$.

\subsection{Multiple-context multideviations; motivating conditions\label{sub:Multiple-context-multideviations-and-conditions}}

Since multideviations are defined for ordinary probability distributions,
there will be a separate set of multideviations for each joint measurement
context: $\multideviation{\itpp_{\sigma}}{P_{\itpp}}{\itpx_{\itpp_{\sigma}}}{\itpp}$.
Note that $\itpp$ is being used in several different ways in this
expression: $P_{\itpp}$ identifies the ordinary probability distribution
to be summed over (the $f$ in eq. \ref{eq:Multideviation-definition}),
$\itpp$ on its own is the set of possible observables (the $B$ in
eq. \ref{eq:Multideviation-definition}), and $\itpp_{\sigma}$ is
the set of observables being correlated (the $\sigma$ in eq. \ref{eq:Multideviation-definition}).

The most common motivating conditions used in Bell's theorems have
relatively simple expression in terms of multideviations.

Parameter independence (no-signalling) is a condition on distributions
between different joint measurement contexts:

\begin{equation}
\forall\rho\supseteq\sigma\left[n_{\itpp_{V\backslash\sigma}}\multideviation{\itpp_{\sigma}}{P_{\itpp}}{\itpx_{\itpp_{\sigma}}}{\itpp}=n_{\itpq_{V\backslash\sigma}}\multideviation{\itpp_{\sigma}}{P_{\left(\itpp_{\rho}\itpq_{V\backslash\rho}\right)}}{\itpx_{\itpp_{\sigma}}}{\left(\itpp_{\rho}\itpq_{V\backslash\rho}\right)}\right]
\end{equation}
That is, if two joint measurement contexts share a set of observables,
then the multideviations for that set will be fixed by one another.

Determinism is a condition on ordinary probability distributions.
It requires the probability for one outcome to be $1$ and the rest
to be $0$. For example, suppose $\mcprobability{}{}{\itpx_{B}}=\delta_{\itpx_{B}=\itpy_{B}}$.
Then
\begin{equation}
\multideviation{\sigma}P{\itpx_{B}}B=\msf{\sigma}B{\itpx_{B},\itpy_{B}}
\end{equation}

Finally, outcome independence is also a condition on ordinary distributions:
\begin{equation}
n_{B}\:\multideviation{\sigma}P{\itpx_{\sigma}}B=\prod_{i\in\sigma}\left(n_{B}\:\multideviation{\left\{ i\right\} }P{x_{i}}B\right)
\end{equation}

\subsection{The CHSH inequality\label{sub:CHSH}}

The first clue to the importance of multideviations for Bell's Theorem
comes from the CHSH inequality, the simplest known tight Bell inequality.
The CHSH inequality (\citealt{ClauserHorne1974}) is the strongest
possible Bell-type inequality for a multiple-context distribution
over a 2x2x2 event space ($\left|V\right|=2$, $\left|M_{i}\right|=2$,
and $\left|N_{p_{i}}\right|=2$). In the natural representation, it
has the following form: 
\begin{equation}
\mcprobability{}{p_{i}p_{j}}{x_{p_{i}}x_{p_{j}}}+\mcprobability{}{p_{i}q_{j}}{x_{p_{i}}y_{q_{j}}}+\mcprobability{}{q_{i}p_{j}}{y_{q_{i}}x_{p_{j}}}-\mcprobability{}{q_{i}q_{j}}{y_{p_{i}}y_{q_{i}}}-\mcprobability{\left\{ i\right\} }{p_{i}}{x_{p_{i}}}-\mcprobability{\left\{ j\right\} }{p_{j}}{x_{p_{j}}}\leq0
\end{equation}
Eight distinct inequalities of this form can be generated by choosing
different values for $\itp p_{V}$, $\itp q_{V}$, $\itp x_{\itpp}$,
and $\itp y_{\itpq}$. The inequality is usually expressed in the
terms of arbitrarily chosen expectation values intended to simplify
the appearance.

However, in the multideviation representation, the CHSH inequality
has a natural simplicity:
\begin{equation}
-\frac{1}{2}\le\multideviation{p_{i}p_{j}}{\none}{\none}{\none}+\multideviation{p_{i}q_{j}}{\none}{\none}{\none}+\multideviation{q_{i}p_{j}}{\none}{\none}{\none}-\multideviation{q_{i}q_{j}}{\none}{\none}{\none}\leq\frac{1}{2}\label{eq:CHSH-1-1}
\end{equation}
where $Q^{\itp{\mu}}=\multideviation{\itp{\mu}}{\mcprobability{}{\itp{\mu}}{\none}}{\itp 1_{\itp{\mu}}}{\none}$.
The multideviation representation exposes an interesting fact---the
constraint depends only on the second-order multideviation degrees
of freedom. The appearance of the marginal degrees of freedom in the
natural representation is an artifact of the representation. 

The interpretation of the binary multideviations given in section
\ref{sub:Even-and-odd} allows a useful understanding of the CHSH
inequality. Since the binary multideviations measure how even or odd
the statistics are (relative to a chosen set of outcomes), the inequality
represents a limit on how incompatible the even or odd statistics
for the joint measurement contexts can be with one another. For example,
if we select $\itp 1_{\left\{ 1,2,3,4\right\} }$ as a reference,
then a joint outcome $\itpx_{\itp{\mu}}$ is even or odd depending
on whether an even or odd number of the outcomes are 2. Given the
starting conditions, it is not possible for the joint outcomes to
be odd with certainty in an odd number of the joint measurement contexts
(see fig. \ref{fig:CHSH-even-odd-interpretation}); the CHSH inequalities
express this fact and indicate how close the statistics generated
from those conditions can get to such a state.
\begin{figure}
\begin{centering}
\begin{center}
\[
\xy 
(0,0)="O";
(26,0)="X";
(0,18)="Y";
(13,0)="X2";
(0,9)="Y2";
{"O"+"X"};{"O"-"X"}**\dir{-}; 
{"O"+"Y"};{"O"-"Y"}**\dir{-}; 
{"O"+"X"+"Y"};{"O"+"X"-"Y"}**\dir{=}; 
{"O"-"X"-"Y"};{"O"+"X"-"Y"}**\dir{=}; 
{"O"-"X"-"Y"};{"O"-"X"+"Y"}**\dir{=}; 
{"O"+"X"+"Y"};{"O"-"X"+"Y"}**\dir{=}; 
%
(6,0)="LabelX";
(0,6)="LabelY";
(3,0)="LabelX2";
(0,3)="LabelY2";
{"O"-"X"-"Y"-"LabelX"-"LabelX"};{"O"-"X"-"Y"}**\dir{=}; 
{"O"-"X"+"Y"-"LabelX"-"LabelX"};{"O"-"X"+"Y"}**\dir{=}; 
{"O"-"X"-"LabelX"};{"O"-"X"}**\dir{-}; 
{"O"-"X"+"Y"+"LabelY"+"LabelY"};{"O"-"X"+"Y"}**\dir{=}; 
{"O"+"X"+"Y"+"LabelY"+"LabelY"};{"O"+"X"+"Y"}**\dir{=}; 
{"O"+"Y"+"LabelY"};{"O"+"Y"}**\dir{-}; 
{"O"-"X"-"Y"-"LabelX"};{"O"-"X"+"Y"-"LabelX"}**\dir{-}; 
{"O"-"X"+"Y"+"LabelY"};{"O"+"X"+"Y"+"LabelY"}**\dir{-}; 
{"O"-"X"-"Y"-"LabelX"-"LabelX"};{"O"-"X"+"Y"-"LabelX"-"LabelX"}**\dir{=}; 
{"O"-"X"+"Y"+"LabelY"+"LabelY"};{"O"+"X"+"Y"+"LabelY"+"LabelY"}**\dir{=}; 
{"O"-"X"+"Y2"-"LabelX2"}*{p_A};
{"O"-"X"-"Y2"-"LabelX2"}*{q_A};
{"O"+"Y"-"X2"+"LabelY2"}*{p_B};
{"O"+"Y"+"X2"+"LabelY2"}*{q_B};
{"O"+"Y"+"LabelY"+"LabelY2"}*{\textrm{Observables for B}};
{"O"-"X"-"LabelX"-"LabelX2"}*{\begin{sideways}Observables for A\end{sideways}};
%
%
{"O"-"X2"+"Y2"}*{{\function P{\textrm{Even}}} =1}; 
{"O"+"X2"+"Y2"}*{{\function P{\textrm{Even}}} =1}; 
{"O"-"X2"-"Y2"}*{{\function P{\textrm{Even}}} =1}; 
{"O"+"X2"-"Y2"}*{{\function P{\textrm{Odd}}} =1}; 
%
%
%
%
{"O"+"X"+"X"+(5,0)}*{\begin{array}{l} \textrm{Even outcomes: 11, 22}\\ \textrm{Odd outcomes: 12, 21} \end{array}}="P13"; 
\endxy 
\]
\par\end{center}
\par\end{centering}

\caption{\label{fig:CHSH-even-odd-interpretation}Interpretation of the CHSH
inequality. The above state is impossible for a system obeying the
upper bound of equation \eqref{eq:CHSH-1-1}. The state in which all
of the above probabilities are zero is disallowed by the lower bound
of \eqref{eq:CHSH-1-1}.}
\end{figure}

In section \ref{sub:Pioneer sets - Special-cases}, the CHSH inequality
is shown to be a special case of a larger classes of inequalities.
The conceptual interpretation offered here is presented systematically
in section \ref{sec:Intepretation-of-the-even/odd-inequalities}.

\section{A projection theorem\label{sec:A-projection-theorem}}

I will now use the multideviation framework to prove an important
result: the Bell distributions---those multiple-context probability
distributions satisfying all Bell inequalities---can be generated
from ordinary probability distributions over the set of omni-joint
outcomes. The existence of this relationship is not surprising in
and of itself; the equivalence of ``existence of the joints'' and
``satisfaction of the Bell inequalities'' is well-known (see fn.
\ref{fn:Existence-of-the-joints-clarification} below). The novel
result is that the mapping is accomplished by ignoring specific multideviation
degrees of freedom, namely, those involving two or more mutually exclusive
observables.\textcolor{magenta}{{} }

\subsection{The theorem\label{sub:The-projection-theorem}}

In the following, the event space is $\SampleESFull$ (see section
\ref{sec:Multiple-context-event-spaces}).

Deterministic distributions are those whose values are all 0 or 1.
Parameter-independent (i.e., no-signalling) distributions are those
whose marginals are independent of the measurement choices of other
observers (see section \ref{sub:Multiple-context-multideviations-and-conditions}).
The deterministic, parameter-independent multiple-context distributions
are thus given by 
\begin{equation}
\function{G_{\itpp}^{\itp{\gamma}}}{\itpx_{\itpp}}=\delta_{\itpx_{\itpp}=\itpgamma_{\itpp}}\label{Vectors in a deterministic, no-signalling polytope-1}
\end{equation}
 where $\itp{\gamma}\in\SampleOmni$. 

The set of distributions that statifsy the Bell inequalities, or Bell
distributions, is the set of convex combinations of deterministic,
parameter-independent distributions: 
\begin{defn*}[Bell distribution]
All Bell distributions can be written as 
\begin{equation}
\mcprobability{}{\itpp}{\itpx_{\itpp}}=\sum_{\itpgamma\in\SampleOmni}\function{\mu}{\itpgamma}\function{G_{\itpp}^{\itp{\gamma}}}{\itpx_{\itpp}}
\end{equation}
 where $\function{\mu}{\itpgamma}\geq0$ and $\sum_{\itpgamma}\function{\mu}{\itpgamma}=1$.\end{defn*}
\begin{thm}
\label{thm:Bell-distributions-are-projections}Bell distributions
are projections of omni-joint distributions, where the ignored degrees
of freedom are multideviations involving two or more mutually-exclusive
observables. \end{thm}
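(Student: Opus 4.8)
The plan is to identify the ``projection'' with marginalization of the omni-joint distribution onto measurement contexts, and then to isolate the content of the theorem in a single computational lemma showing that this marginalization is insensitive to exactly those multideviations indexed by sets of mutually-exclusive observables. Throughout I write $B=\cup M_{V}$ for the omni-joint index set, let $\mu$ be an ordinary probability distribution on $\SampleOmni$, and call a subset $\tau\seq B$ \emph{compatible} if $|\tau\inter M_{i}|\le1$ for every $i\in V$ (equivalently, if $\tau\seq\itpp$ for some context $\itpp\in\Pi M_{V}$) and \emph{incompatible} otherwise. Incompatibility is precisely the condition that $\tau$ involve two or more mutually-exclusive observables. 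The map I will call the projection sends $\mu$ to the multiple-context distribution whose context-$\itpp$ component is the marginal $\mcprobability{}{\itpp}{\itpx_{\itpp}}=\sum_{\itpgamma:\,\itpgamma_{\itpp}=\itpx_{\itpp}}\function{\mu}{\itpgamma}$.

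The technical heart is a marginalization identity for multideviations. Fix a context $\itpp$ and a compatible $\tau\seq\itpp$, and set $P_{\itpp}=\mcprobability{}{\itpp}{\none}$. Starting from \eqref{eq:Multideviation-definition}, the seed function $\msf{\tau}{B}{\itpx_{\tau},\itpgamma_{\tau}}$ depends on $\itpgamma$ only through its $\tau$-components, so I carry out the sum over $\itpgamma_{B\backslash\tau}$ first, collapsing $\mu$ to its marginal on $\tau$; because $\tau\seq\itpp$, that marginal equals the $\tau$-marginal of $P_{\itpp}$ itself. Comparing with the definition of the $\tau$-multideviation of $P_{\itpp}$ then yields
\[
n_{B}\,Q^{\tau}_{\mu}(\itpx_{\tau})\;=\;n_{\itpp}\,Q^{\tau}_{P_{\itpp}}(\itpx_{\tau}),
\]
the only difference between the two being the global normalization inherited from $n_{B}$ versus $n_{\itpp}$. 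Two consequences follow at once. First, since $P_{\itpp}(\itpx_{\itpp})=\sum_{\tau\seq\itpp}Q^{\tau}_{P_{\itpp}}(\itpx_{\tau})$, each context marginal is determined entirely by the compatible multideviations $\{Q^{\tau}_{\mu}:\tau\seq\itpp\}$; as no incompatible $\tau$ sits inside any single context, the incompatible multideviations never enter any marginal and are genuinely free---these are the ignored degrees of freedom. Second, because the shared quantity $n_{B}Q^{\tau}_{\mu}$ is common to every context containing $\tau$, the identity reproduces exactly the weighted parameter-independence relation of section \ref{sub:Multiple-context-multideviations-and-conditions}, so the projected object automatically satisfies no-signalling.

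It then remains to check that the image of the projection is exactly the set of Bell distributions. For the forward inclusion, positivity of $\mu$ lets me write it as the convex combination $\sum_{\itpgamma}\function{\mu}{\itpgamma}\delta_{\itpgamma}$ of point masses; marginalization is linear and carries $\delta_{\itpgamma}$ to $\delta_{\itpx_{\itpp}=\itpgamma_{\itpp}}=G^{\itpgamma}_{\itpp}(\itpx_{\itpp})$, so the projection of $\mu$ is $\sum_{\itpgamma}\function{\mu}{\itpgamma}G^{\itpgamma}_{\itpp}(\itpx_{\itpp})$, a Bell distribution by definition. For the converse, every Bell distribution is presented in precisely this form, and its weight $\function{\mu}{\itpgamma}$ is itself an omni-joint distribution whose projection recovers it; hence each Bell distribution lies in the image. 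Unpacking this through the first consequence above gives the degree-of-freedom bijection previewed in section \ref{sub:A-conceptual-prelude-the-projection-theorem}: the compatible multideviations of $\mu$ correspond one-to-one with the parameter-independence--reduced multideviations of the multiple-context distribution.

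The step I expect to be the main obstacle is the marginalization identity together with the bookkeeping that pins down the ignored set \emph{exactly}. I must verify both that every incompatible $\tau$ fails to appear in any context marginal (so it is ignored) and that every compatible $\tau$ appears in the marginal of at least one context (so none is ignored spuriously), and that the factors $n_{B}$ and $n_{\itpp}$ conspire to give precisely the weighted parameter-independence relation rather than some rescaled variant. The remaining inclusions are essentially definitional once marginalization is identified with the projection, so the weight of the argument rests on that one lemma and its exact accounting of degrees of freedom.
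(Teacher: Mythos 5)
Your proposal is correct, and at the structural level it follows the same route as the paper: the projection is marginalization of $\mu$ onto contexts, the heart is an identity relating context multideviations to omni-joint multideviations for compatible index sets, incompatible sets drop out because they lie inside no context, and the identification with Bell distributions goes through the point-mass decomposition $\function{G_{\itpp}^{\itpgamma}}{\itpx_{\itpp}}=\delta_{\itpx_{\itpp}=\itpgamma_{\itpp}}$. The execution differs in one respect worth noting: the paper obtains its identity \eqref{eq:Mapping-between-OJ-and-MC} by expanding $\function{\mu}{\itpgamma}$ into its full multideviation decomposition, summing out $\itpgamma_{\unionM\backslash\itpp}$ with \eqref{MGFSum}, and then matching against \eqref{eq:ProjectionTheorem2} using \eqref{eq:ProjectionTheorem3} and \eqref{eq:MD-orthogonality}, whereas you evaluate $\multideviation{\tau}{\mu}{\itpx_{\tau}}{\unionM}$ and $\multideviation{\tau}{P_{\itpp}}{\itpx_{\tau}}{\itpp}$ directly against their common $\tau$-marginal. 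Your derivation is shorter and keeps the normalization visible at every step; you also make both inclusions (projections of arbitrary $\mu$ are Bell distributions, and every Bell distribution is the projection of its own weight function) explicit, where the paper's presentation is one-directional with the converse left implicit.

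Moreover, the bookkeeping you singled out as the main obstacle is exactly where the paper's printed proof slips, and your constant is the correct one. For $\rho\seq\itpp$ the summand $\multideviation{\rho}{\mu}{\itpgamma_{\rho}}{\unionM}$ is constant in the summation variables $\itpgamma_{\unionM\backslash\itpp}$, so
\begin{equation}
\sum_{\itpgamma_{\unionM\backslash\itpp}}\multideviation{\rho}{\mu}{\itpgamma_{\rho}}{\unionM}=n_{\unionM\backslash\itpp}\,\multideviation{\rho}{\mu}{\itpgamma_{\rho}}{\unionM}
\end{equation}
and not $n_{\unionM\backslash\rho}\,\multideviation{\rho}{\mu}{\itpgamma_{\rho}}{\unionM}$ as the paper writes; the two factors agree only when $\rho=\itpp$. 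Accordingly, \eqref{eq:Mapping-between-OJ-and-MC} should read
\begin{equation}
\multideviation{\itpp_{\sigma}}{P_{\itpp}}{\itpx_{\itpp_{\sigma}}}{\itpp}=n_{\unionM\backslash\itpp}\,\multideviation{\itpp_{\sigma}}{\mu}{\itpx_{\itpp_{\sigma}}}{\unionM}
\end{equation}
which is precisely your $n_{\unionM}\multideviation{\tau}{\mu}{\itpx_{\tau}}{\unionM}=n_{\itpp}\multideviation{\tau}{P_{\itpp}}{\itpx_{\tau}}{\itpp}$. A direct check confirms this: with two observers, $M_{1}=\left\{ a,a'\right\}$, $M_{2}=\left\{ b,b'\right\}$, all outcomes binary, $\itpp=\left(a,b\right)$ and $\rho=\left\{ a\right\}$, the ratio of the two multideviations is $n_{\unionM\backslash\itpp}=4$, not $n_{\unionM\backslash\rho}=8$. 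Your factor is also the one consistent with the paper's own weighted no-signalling relation: since $n_{\itpp}\multideviation{\itpp_{\sigma}}{P_{\itpp}}{\itpx_{\itpp_{\sigma}}}{\itpp}=n_{\unionM}\multideviation{\itpp_{\sigma}}{\mu}{\itpx_{\itpp_{\sigma}}}{\unionM}$ is context-independent, dividing by $n_{\itpp_{\sigma}}$ yields exactly the parameter-independence condition of section \ref{sub:Multiple-context-multideviations-and-conditions}, whereas the paper's factor $n_{\unionM\backslash\itpp_{\sigma}}$, being itself context-independent, would force the unweighted multideviations to coincide across contexts, which fails when outcome sets have unequal cardinalities. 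The slip does not affect the truth of the theorem, but your accounting is the one that should be carried forward into \eqref{eq:BI in terms of MDs}, which relies on the mapping.
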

\begin{proof}
First, we express an arbitrary Bell distribution in terms of omni-joint
multideviations:{\allowdisplaybreaks
\begin{align}
\mcprobability{}{\itpp_{V}}{\itpx_{\itpp_{V}}} & =\sum_{\itpgamma_{\cup M}}\function{\mu}{\itpgamma}\function{G_{\itpp}^{\itp{\gamma}}}{\itpx_{\itpp}}\\
 & =\sum_{\itpgamma_{\cup M}}\function{\mu}{\itpgamma}\delta_{\itpx_{\itpp}=\itpgamma_{\itpp}}\\
 & =\sum_{\itpgamma_{\cup M}}\left(\sum_{\rho\in\powerset{\cup M}}\multideviation{\rho}{\mu}{\itpgamma_{\rho}}{\cup M}\right)\delta_{\itpx_{\itpp}=\itpgamma_{\itpp}}\\
 & =\sum_{\itpgamma_{\itpp}}\sum_{\itpgamma_{\cup M\backslash\itpp}}\left(\sum_{\rho\in\powerset{\cup M}}\multideviation{\rho}{\mu}{\itpgamma_{\rho}}{\cup M}\right)\delta_{\itpx_{\itpp}=\itpgamma_{\itpp}}\\
 & =\sum_{\itpgamma_{\itpp}}\delta_{\itpx_{\itpp}=\itpgamma_{\itpp}}\sum_{\rho\in\powerset{\cup M}}\sum_{\itpgamma_{\cup M\backslash\itpp}}\multideviation{\rho}{\mu}{\itpgamma_{\rho}}{\cup M}\\
 & =\sum_{\itpgamma_{\itpp}}\delta_{\itpx_{\itpp}=\itpgamma_{\itpp}}\sum_{\rho\in\powerset{\cup M}}\delta_{\rho\subseteq\itpp}\: n_{\cup M\backslash\rho}\,\multideviation{\rho}{\mu}{\itpgamma_{\rho}}{\cup M}\\
 & =\sum_{\itpgamma_{\itpp}}\delta_{\itpx_{\itpp}=\itpgamma_{\itpp}}\sum_{\rho\in\powerset{\itpp}}n_{\cup M\backslash\rho}\multideviation{\rho}{\mu}{\itpgamma_{\rho}}{\cup M}\\
 & =\sum_{\rho\in\powerset{\itpp}}n_{\cup M\backslash\rho}\multideviation{\rho}{\mu}{\itpx_{\rho}}{\cup M}\\
 & =\sum_{\sigma\in\powerset V}n_{\cup M\backslash\itpp_{\sigma}}\multideviation{\itpp_{\sigma}}{\mu}{\itpx_{\itpp_{\sigma}}}{\cup M}\label{eq:ProjectionTheorem1}
\end{align}
}Next, we express the same distribution in terms of multideviations
over the indicated joint measurement context:
\begin{equation}
\mcprobability{}{\itpp_{V}}{\itpx_{\itpp_{V}}}=\sum_{\sigma\in\powerset V}\multideviation{\itpp_{\sigma}}{P_{\itpp}}{\itpx_{\itpp_{\sigma}}}{\itpp}\label{eq:ProjectionTheorem2}
\end{equation}
By \eqref{eq:MSF definition}, for any $\mu\subseteq V$,
\begin{equation}
\frac{n_{\itpp}}{n_{\cup M}}\msf{\itpp_{\rho}}{\itpp}{\itpx_{\itpp_{\rho}},\itpy_{\itpp_{\rho}}}=\msf{\itpp_{\rho}}{\cup M}{\itpx_{\itpp_{\rho}},\itpy_{\itpp_{\rho}}}
\end{equation}
This means that, for any function $\function f{\itpx_{\itpp}}$,
\begin{equation}
\sum_{\itpx_{\itpp}}\msf{\itpp_{\rho}}{\itpp}{\itpx_{\itpp_{\rho}},\itpy_{\itpp_{\rho}}}\function f{\itpx_{\itpp}}=\sum_{\itpx_{\cup M}}\msf{\itpp_{\rho}}{\cup M}{\itpx_{\itpp_{\rho}},\itpy_{\itpp_{\rho}}}\function f{\itpx_{\itpp}}\label{eq:ProjectionTheorem3}
\end{equation}
Using \eqref{eq:MD-orthogonality} and \eqref{eq:ProjectionTheorem3}
to manipulate \eqref{eq:ProjectionTheorem1} and \eqref{eq:ProjectionTheorem2},
we derive: 
\begin{equation}
\multideviation{\itpp_{\sigma}}{P_{\itpp}}{\itpx_{\itpp_{\sigma}}}{\itpp}=n_{\cup M\backslash\itpp_{\sigma}}\multideviation{\itpp_{\sigma}}{\mu}{\itpx_{\itpp_{\sigma}}}{\cup M}\label{eq:Mapping-between-OJ-and-MC}
\end{equation}
That is, given an omni-joint distribution $\function{\mu}{\itpgamma}$,
we can construct a multiple-context distribution that satisfies the
Bell inequalities using only those multideviation degrees of freedom
that are subsets of the joint measurement contexts, $\itpp\in\SampleJMS$.
These are just the subsets of $\cup M$ that include no more than
one element of $M_{i}$ for each observer $i\in V$.
\end{proof}

\subsection{Geometric interpretation}

\prettyref{thm:Bell-distributions-are-projections} has a straightforward
geometric interpretation. The set of omni-joint distributions corresponds
to a simplex in $\dsR^{\left(n_{\unionM}-1\right)}$. The multideviations
identify orthogonal subspaces of that vector space. The theorem says
that the set of Bell distributions corresponds to a polytope formed
by projecting the simplex into the subspace generated by the set of
multideviation vectors associated with those sets of observables that
can be measured simultaneously (an affine transformation is also needed).
See Appendix \ref{sec:Geometry and multideviations} for more on the
geometric approach.

\subsection{Philosophical consequences}

Let 
\begin{equation}
\Psi\equiv\left\{ \rho\subseteq\cup M\;|\;\forall i\in V\left[\left|M_{i}\cap\rho\right|\leq1\right]\right\} 
\end{equation}
be the collection of all sets of comeasurable observables (i.e., no
two are mutually exclusive). 

Now imagine a world in which all observables $\cup M$ are measured
together, producing a probability distribution, $\mu$, over the omni-joint
outcomes. Imagine, however, that the observers are not permitted to
share their results with one another; rather, some administrator takes
their results, calculates the multideviations, and returns only those
corresponding to elements of $\Psi$. The researchers will be able
to recover some probability distributions corresponding to various
joint measurements, but they will not be able to reconstruct $\mu$
in total. 

\prettyref{thm:Bell-distributions-are-projections} tells us that
this is effectively the situation with Bell distributions. Each Bell
distribution is equivalent to at least one probability distribution
over all observables taken together, but where multideviation correlations
involving mutually exclusive observables are considered inaccessible.

As noted above, this result is an extension of the well-known ``existence
of the joints'' theorem---if a multiple-context distribution satisfies
the Bell inequalities, then there is a joint distribution over all
observables (an ``omni-joint'' distribution) that reproduces the
original distribution as marginals.%
\footnote{\label{fn:Existence-of-the-joints-clarification}\citet{Fine1982}
proved this for the simplest case; \citet{Fogel2011} proved this
for the general case. The importance of this theorem has been critiqued
by \citet{SvetlichnyEtAl1988}, \citet{Butterfield92}, and \citet{MullerPlacek:2001Synthese}.%
} The novelty here is that the multideviations make clear precisely
which aspects of the omni-joint distribution are hidden; or, put another
way, the multideviations show us what information needs to be restored
in order to reconstruct the omni-joint distribution. The inequalities
\eqref{eq:Multideviation-basic-inequalities} determine the ranges
of allowed values for the hidden multideviations. It is when, and
only when, these inequalities are inconsistent, given the observable
multideviation degrees of freedom, that at least one Bell inequality
is violated.

Thus, the multidevations for the elements in $\unionM\backslash\Psi$
 are, effectively, the hidden variables compatible with the Bell inequalities.
A theory may have a richer set of hidden variables, but they will
have to reduce to the multideviations involving mutually exclusive
correlations, if the observable statistics are the Bell distributions.

\part{Tight Bell inequalities}

In this part, I use the projection theorem of section \ref{sec:A-projection-theorem},
along with matroid theory, to outline a new method for finding tight
Bell inequalities. I then present a new class of such inequalities,
which turn out to have a straightforward interpretation, and show
that they are violated by quantum mechanics.

\section{Method for finding BIs\label{sec:Method-for-finding-BIs}}

\subsection{Preliminaries}

A Bell inequality is an inequality satisfied by a multiple-context
distribution subject to certain constraints. Some arbitrariness exists
in the choice of constraints, since different choices produce the
same set of distributions (and hence the same inequalities). For convenience,
I have chosen to work with parameter independence (i.e., no-signalling)
and determinism.%
\footnote{By taking the observable distributions to be convex combinations of
parameter-independent, deterministic distributions, I have also implicitly
assumed that no backwards causation occurs.%
}

Recall that a tight Bell inequality is an extremal, maximally restrictive
Bell inequality (see fn. \ref{fn:Tight-Bell-Inequalities}). Thus,
the set of all Bell inequalities can be characterized by the complete
set of tight Bell inequalities.

Geometrically, the set of Bell distributions (convex combinations
of parameter-independent, deterministic distributions) corresponds
to a particular polytope. The tight Bell inequalities correspond to
the facets of that polytope. Thus, geometric tools can aid in the
search for these inequalities.%
\footnote{\citet{Pitowsky1989,Pitowsky:1991} are the classic texts introducing
geometric methods to the study of Bell inequalities.%
} However, much of the geometric structure involved in characterizing
polytopes is irrelevant to the search for the Bell inequalities. In
the following, I will use a more abstract mathematical object, the
matroid, to isolate the structure important for the task at hand.

\subsection{Matroids; duality theorem\label{sub:Matroids}}

Matroids are mathematical objects that can be used to encode the combinatoric
aspects of geometric structures.%
\footnote{Matroids have a multitude of other uses, particularly in graph theory.
For background on matroids, see \citet{Oxley:2011fk} and \citet{Bjorner:1999uq}. %
} My use of matroid theory is relatively limited, so the details will
be kept to a minimum here. The reader can skip to section \ref{sub:TBIC}
without significant loss of comprehension, if desired.

Matroids come in two varieties, oriented and unoriented. The extra
structure provided by oriented matroids is needed here, so all matroids
described below should be understood as oriented.

A matroid can be characterized in a variety of ways. One way is to
begin with a base set, $E$, and then specify a set of \emph{bases}
that satisfy a particular set of axioms. Various sets can then be
defined, which carry names drawn from linear algebra and graph theory:
\emph{independent sets,} \emph{dependent sets, hyperplanes}, \emph{circuits,
etc. }

Two types of matroids will be useful here. \emph{Vector matroids}
encode the linear independence properties of a set of vectors. \emph{Affine
matroids} encode the affine dependencies of a set of points. Any set
of vectors defines a vector matroid, and any set of points, including
the vertices of a polytope, defines an affine matroid.%
\footnote{The bases of vector matroid are the maximal sets of linearly independent
vectors, and similarly with affine matroids.%
}

The task of finding the facets of a polytope is equivalent to that
of finding the positive hyperplanes of the corresponding affine matroid.

Matroids have duals, which are also matroids. The bases of a dual
matroid are the complements of the bases of the original.

Given a factorizable set $\Pi A_{B}$, and some subset $\Psi\subseteq\powerset B$,
one can construct a polytope by projecting the simplex in $\dsR^{\left|\Pi A_{B}\right|-1}$
defined by $\Pi A_{B}$ into the subspace defined by the collection
of multideviation vectors corresponding to the elements of $\Psi$.
One can then use this polytope to define an affine matroid, $\function{M_{A}}{\Psi}$.
After choosing an origin, one can define a vector matroid, $\function{M_{V}}{\Psi}$,
using the vectors pointing from the origin to the vertices.

I have been able to prove the following result concerning these matroids:
\begin{thm}[Duality of multideviation projections]
 \label{thm:-Matroid-duality}The affine matroid of a multideviation
projection $\Psi$ is the dual of the vector matroid of the complement
$\powerset B\backslash\Psi$ formed by taking the center of the polytope
as the origin:
\begin{equation}
\function{M_{A}^{*}}{\Psi}=\function{M_{V}}{\powerset B\backslash\left(\left\{ \emptyset\right\} \cup\Psi\right)}
\end{equation}

\end{thm}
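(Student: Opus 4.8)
The plan is to present both matroids as \emph{vector} matroids arising from a single orthonormal basis, after which the asserted duality is just the orthogonal-complement relation between their representing matrices (Gale duality). First I would record the structural fact that drives everything. Regard functions on $\Pi A_{B}$ as $\dsR^{n_{B}}$ with its standard inner product; the delta functions $\{\delta_{\itpx=\itpgamma}\}_{\itpgamma}$ are then an orthonormal basis. By \eqref{eq:Multideviation-definition} the $\sigma$-multideviation of $\delta_{\itpx=\itpgamma}$ is the function $\itpx_{\sigma}\mapsto\msf{\sigma}B{\itpx_{\sigma},\itpgamma_{\sigma}}$, so \eqref{MGFOrtho} shows that the subspaces $U_{\sigma}$ spanned by the $\sigma$-multideviations of the delta functions are mutually orthogonal for distinct $\sigma$, while \eqref{MGFCompleteness} shows $\bigoplus_{\sigma\in\powerset B}U_{\sigma}=\dsR^{n_{B}}$. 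Thus the multideviation map furnishes an orthogonal direct-sum decomposition of $\dsR^{n_{B}}$ indexed by $\powerset B$, and projecting the orthonormal basis onto any subcollection of blocks is an orthogonal projection. A second, routine fact is that the centre of each projected polytope is the image of the uniform distribution, all of whose multideviations vanish except the $\emptyset$-block (by \eqref{MGFSum}); passing to vectors-from-centre therefore deletes exactly the $\emptyset$-block. Writing $N(\Phi)$ for the vector matroid of the projection of $\{\delta_{\itpx=\itpgamma}\}$ onto $\bigoplus_{\sigma\in\Phi}U_{\sigma}$, this says $\function{M_{V}}{\Phi}=N(\Phi\backslash\{\emptyset\})$.

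Next I would fold the affine matroid into this picture by homogenization. The $\emptyset$-block coordinate is the constant $\msf{\emptyset}B{\none}=1/n_{B}$, identical for every $\itpgamma$, so it plays precisely the role of the extra ``$1$'' appended in the standard homogenization that converts an affine matroid of points into a vector matroid. Hence the affine matroid of the $\Psi$-projected points equals the vector matroid of the basis projected onto $\bigoplus_{\sigma\in\{\emptyset\}\cup\Psi}U_{\sigma}$; that is, $\function{M_{A}}{\Psi}=N(\{\emptyset\}\cup\Psi)$. Both matroids are now instances of the single construction $N(\cdot)$.

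The crux is then one duality lemma: for every $\Phi\seq\powerset B$, $N(\Phi)^{*}=N(\powerset B\backslash\Phi)$. To prove it I would assemble the orthonormal basis, expressed in coordinates adapted to the splitting, into an orthogonal matrix $\widetilde W$ of size $n_{B}$, and partition its rows into those indexed by $\Phi$ (a submatrix $A$) and those indexed by $\powerset B\backslash\Phi$ (a submatrix $C$); the columns of $A$ and of $C$ represent $N(\Phi)$ and $N(\powerset B\backslash\Phi)$ respectively. Orthogonality of $\widetilde W$ gives $AC^{\mathsf T}=0$ with the row ranks of $A$ and $C$ summing to $n_{B}$, so the row space of $C$ is exactly the orthogonal complement of the row space of $A$. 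Because the dual of a represented matroid is represented by the orthogonal complement of its row space (the standard-form identity $[I\mid D]\leftrightarrow[-D^{\mathsf T}\mid I]$), the columns of $C$ represent $N(\Phi)^{*}$, which is the lemma. Taking $\Phi=\{\emptyset\}\cup\Psi$ then gives
\begin{equation}
\function{M_{A}^{*}}{\Psi}=N(\{\emptyset\}\cup\Psi)^{*}=N\!\left(\powerset B\backslash(\{\emptyset\}\cup\Psi)\right)=\function{M_{V}}{\powerset B\backslash(\{\emptyset\}\cup\Psi)},
\end{equation}
the last step because the complementary index set already omits $\emptyset$.

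I expect the real work to sit in two places. The lesser is the bookkeeping around $\emptyset$: one must confirm that the constant $\emptyset$-block simultaneously serves as the homogenizing coordinate of $\function{M_{A}}{\Psi}$ and is the single coordinate that vectors-from-centre remove, so that the two occurrences of $\{\emptyset\}$ in the statement are genuinely forced rather than cosmetic. The true crux is the duality lemma, and specifically the claim that the abstract oriented-matroid dual is computed by orthogonal complementation of the representing subspace: this is where the orthonormality supplied by \eqref{MGFCompleteness} is essential, since it is what upgrades the generic Gale-dual pairing to a clean orthogonal-complement pairing and what lets me read $C$ off as a representation of $N(\Phi)^{*}$ rather than merely of some matroid abstractly dual to $N(\Phi)$. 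Coincidences among projected points (two basis vectors agreeing on a block) should appear as loops and parallel elements and be handled automatically by matroid duality; because the theorem is stated for \emph{oriented} matroids, the one extra check I would not skip is that passing to $C$ preserves the orientation (sign) data, not merely the underlying matroid.
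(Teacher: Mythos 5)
The paper gives you nothing to compare against here: Theorem \ref{thm:-Matroid-duality} is only asserted (``I have been able to prove the following result concerning these matroids''), and no proof of it appears anywhere in the paper --- Appendix \ref{sec:Geometry and multideviations} supplies the MD-vector machinery but stops short of any duality argument. Judged on its own, your sketch is correct, and it is very plausibly the intended argument, since every ingredient is drawn from that appendix. The orthogonal block decomposition $\dsR^{n_{B}}=\bigoplus_{\sigma\in\powerset B}U_{\sigma}$ is exactly what \eqref{MGFCompleteness} and \eqref{MGFOrtho} provide; the identification of the $\emptyset$-block with the homogenizing coordinate is right, because every projected point has the same constant $\emptyset$-component $1/n_{B}$, so the affine matroid of the $\Psi$-projection is the vector matroid of the $\left(\left\{ \emptyset\right\} \cup\Psi\right)$-projection; the center of any projected polytope is the image of the uniform distribution, whose multideviations of non-empty order vanish by \eqref{MGFSum}, so passing to vectors-from-center deletes exactly the $\emptyset$-block; and your Gale-duality lemma $N\left(\Phi\right)^{*}=N\left(\powerset B\backslash\Phi\right)$ follows from orthogonality of the adapted basis matrix together with the standard fact that orthogonally complementary subspaces of $\dsR^{E}$ realize dual oriented matroids. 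The orientation worry you flag at the end is not a genuine gap: the duality theorem for realizable oriented matroids is stated at the level of sign vectors (the signed vectors of a subspace are the signed covectors of its orthogonal complement), so citing it closes the argument with no extra work. Two details do deserve explicit treatment in a written-out proof: first, the ground set of both matroids must be taken to be all $n_{B}$ projected points, with coincident projections handled as parallel elements, rather than the geometric vertex set of the projected polytope --- otherwise the two sides of the theorem are matroids on different ground sets; second, the paper's projection $\vec{P}_{\Psi}$ may differ from the orthogonal projection by invertible scalings within each block, which changes neither the affine nor the vector matroid, so your reduction to genuinely orthogonal projections is legitimate but should be said.
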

The importance of this theorem for the task at hand cannot be understated.
Vector matroids are generally easier to work with than affine matroids.
Furthermore, in a well-known result in matroid theory, the hyperplanes
of a matroid are the complements of the circuits of the dual, and
circuits are generally easier to specify than hyperplanes.

So, the task of finding the tight Bell inequalities corresponds to
that of finding the positive circuits of the vector matroid defined
by the complement of $\Psi$, which is the set of observable multideviation
degrees of freedom.

\subsection{Necessary and sufficient conditions (TBIC)\label{sub:TBIC}}

Drawing on the duality theorem, the necessary and sufficient conditions
for a set $\Gamma\subseteq\SampleOmni$ to define a tight Bell inequality
over an event space $\SampleES$ are the following:
\begin{enumerate}
\item There is a function $f$ such that:
\begin{align}
\textrm{(a)} & \forall\itpgamma\in\SampleOmni\left[\function f{\itpgamma}=0\longleftrightarrow\itpgamma\notin\Gamma\right]\nonumber \\
\textrm{(b)} & \forall\rho\in\powerset{\cup M}\backslash\Psi\;\;\forall\itpgamma\in\Gamma\left[\multideviation{\rho}f{\itpgamma_{\rho}}{\none}=0\right]\label{eq:NS-conditions}\\
\textrm{(c)} & \forall\itpgamma\in\Gamma\left[\function f{\itpgamma}>0\right]\nonumber 
\end{align}
where $\Psi=\bigcup_{\itpp\in\SampleJMS}\bigcup_{\sigma\in\powerset V}\left\{ \itpp_{\sigma}\right\} $.
\item All functions meeting criterion 1 are scalar multiples of one another.
\end{enumerate}
I will refer to these as the \emph{tight Bell inequality conditions}
(TBIC).

The connection to the above matroid result is as follows: 1a requires
the function $f$ to be a faithful representation of the set $\Gamma$;
1b requires $\Gamma$ to be a dependent set; 2 requires $\Gamma$
to be minimally dependent (i.e., a circuit); and 1c requires $\Gamma$
to be all positive. Thus $\Gamma$ is a positive circuit of $\function{M_{V}}{\Psi^{c}}$.

Given such a set $\Gamma$ and function $f$, the corresponding inequality
is
\begin{equation}
\sum_{\itpgamma_{\cup M}}\function f{\itpgamma}\mcprobability{}{}{\itpgamma}\geq0
\end{equation}
which is equivalent to
\begin{equation}
\sum_{\sigma\in\Psi}\sum_{\itpgamma_{\sigma}}\multideviation{\sigma}f{\itpgamma_{\sigma}}{\none}\multideviation{\sigma}P{\itpgamma_{\sigma}}{\none}\geq0\label{eq:BI in terms of MDs}
\end{equation}
Using \eqref{eq:Mapping-between-OJ-and-MC}, we can write \eqref{eq:BI in terms of MDs}
in terms of the multiple-context distribution (i.e., the observed
statistics). What \eqref{eq:BI in terms of MDs} says is that when
the tight Bell inequalities are expressed in terms of multideviations
of the observed statistics, the multideviations of the linear dependence
function $f$ give the coefficients. 

Condition 1b of the TBIC allows a remarkable simplification, one that
will prove useful below. For any $i\in V$, $p_{i},q_{i}\in M_{i}$,
and $\itpgamma_{\unionM}$,

\begin{equation}
n_{p_{i}}n_{q_{i}}\function f{\itpgamma}-n_{q_{i}}\function{f^{\left\{ p_{i}\right\} }}{\itpgamma}-n_{p_{i}}\function{f^{\left\{ q_{i}\right\} }}{\itpgamma}+\function{f^{\left\{ p_{i},q_{i}\right\} }}{\itpgamma}=0\label{eq:LS-equations}
\end{equation}
where $\function{f^{\sigma}}{\itpgamma_{\sigma}}\equiv\sum_{\itpgamma_{V\backslash\sigma}}\function f{\itpgamma}$.
This form of the condition will be particularly useful in sections
\ref{sub:Simplification---lifting-up} and \ref{sec:Pioneer-sets}.
\eqref{eq:LS-equations-simplified} uses simplified notation to express
this in even simpler form.

\subsection{Simplification---lifting up\label{sub:Simplification---lifting-up}}

As \citet{Pironio2005} showed, any tight Bell inequality specifies
a similarly structured inequality for more observers, more observables,
and/or different types of observables. In the next section, I will
specify solutions for arbitrary numbers of observers each choosing
between two binary observables. Here I will specify the ``lifted
up'' solutions for greater numbers of observables or outcomes.

Consider an arbitrary event space, $\SampleESFull$, and a binary
event space with the same number of observers, $\dsTwo_{V}\equiv\left(V,M_{V}^{'},N_{\cup M_{V}}^{'}\right)$,
where $\left|M_{i}^{'}\right|=2$ for all $i\in V$ and $\left|N_{p_{i}}^{'}\right|=2$
for all $p_{i}\in\cup M_{V}$. Suppose $\Gamma\subseteq\Pi N_{\cup M_{V}^{'}}^{'}$
picks out a solution of the TBIC for $\dsTwo_{V}$, where $f$ is
the corresponding function. Now do the following:
\begin{enumerate}
\item Select some $\itpp_{V},\itpq_{V}\in\SampleJMS$ where $p_{i}\neq q_{i}$
for all $i\in V$. Let $PQ=\left(\itpp_{V}\cup\itpq_{V}\right)$. 
\item Relabel the elements of $M_{V}^{'}$ so that $M_{i}^{'}=\left\{ p_{i},q_{i}\right\} $
for all $i$. 
\item Let $L_{p_{i}}=\function{\scL}{N_{p_{i}}}$ be a boolean lattice over
$N_{p_{i}}$. Select some $\itpalpha_{PQ}\in\Pi L_{PQ}$.%
\footnote{The lattice intuple $\itpalpha_{PQ}$ represents a block of intuples
formed by taking the Cartesian product of the lattice elements. For
example, if $\alpha_{i}=\left\{ 1_{i},2_{i}\right\} $ and $\alpha_{j}=\left\{ 2_{j},3_{j}\right\} $,
then $\itpalpha_{\left\{ i,j\right\} }=\left\{ 1_{i}2_{j},1_{i}3_{j},2_{i}2_{j},2_{i}3_{j}\right\} $.%
}
\item Define a mapping function $\itp{\chi}:\SampleOmni\rightarrow\Pi N_{PQ}^{'}$
such that $\forall\mu_{i}\in PQ$,
\begin{equation}
\left(\function{\itp{\chi}}{\itpgamma}\right)_{\mu_{i}}=\begin{cases}
1_{\mu_{i}} & \gamma_{\mu_{i}}\in\alpha_{i}\\
2_{\mu_{i}} & \gamma_{\mu_{i}}\notin\alpha_{i}
\end{cases}
\end{equation}

\item Let $\Gamma^{*}=\left\{ \itpgamma\in\SampleOmni\;|\;\function{\itp{\chi}}{\itpgamma}\in\Gamma\right\} $.
\end{enumerate}
Let $f$ be a real-valued function over $\Pi N_{\cup M_{V}^{'}}^{'}$.
Then define $f^{*}$ such that
\begin{equation}
\function{f^{*}}{\itpgamma}=\function f{\itp{\chi}\left(\itpgamma\right)}\label{eq:f-star}
\end{equation}
where $\itpgamma\in\SampleOmni$.

If we substitute \eqref{eq:f-star} into \eqref{eq:LS-equations}
for the full event space, the resulting equations are identical to
\eqref{eq:LS-equations} for $\Gamma$ in the event space $\dsTwo_{V}$.
Thus, if $\Gamma$ satisfies the TBIC for $\dsTwo_{V}$, then $\Gamma^{*}$
must satisfy the TBIC for $\SampleESFull$.

This shows that Bell inequalities for cases where observers have many
choices of arbitrarily complicated observables can be generated from
Bell inequalities for cases where the same number of observers choose
between two binary observables. This does \emph{not} show that these
are the only Bell inequalities for the more complicated event spaces.%
\footnote{Some inequalities for more than 2 observables per observer and more
than 2 outcomes per observable that are not reducible in this way
are known (see \citealt{GargMermin:1982,CollinsGisin:2004}).%
}

The ``lifted up'' solutions have essentially the same structure
as the source solutions. All but a pair of observables for each observer
are ignored, and the outcome space for each observable is viewed as
binary (i.e., the outcome either is or is not in $\alpha_{p_{i}}$).
The resulting inequalities are thus best expressed in terms of Boolean
multideviations (see section \ref{sub:Boolean-multideviations}):
\begin{equation}
\sum_{\rho\in\Psi}\multideviation{\rho}f{\itp 1_{\sigma}}{\none}\function{W_{P}^{\rho,\unionM}}{\itpalpha_{\rho}}\geq0\label{eq:BI in terms of MDs-W}
\end{equation}
Note that the form is invariant under changes in the number or types
of observables.

\section{Pioneer sets---new tight Bell inequalities \label{sec:Pioneer-sets} }

In this section I will present a set of solutions to the TBIC for
arbitrary numbers of observers choosing between two binary observables.
As shown in section \ref{sub:Simplification---lifting-up}, these
will also generate solutions for arbitrarily complicated physical
scenarios. I refer to these solutions as ``pioneer sets'', for the
way they way they branch out through the outcome space. 

While the set of these solutions grows exponentially with the number
of observers, and while solutions with genuinely new structure exist
at each level, these are unfortunately but a small portion of the
set of all solutions to the TBIC.%
\footnote{For example, for 3 observers, there are 352 pioneer sets. However,
\citet{PitwoskySvozil2001} have shown the existence of 53856 facets.%
} However, these solutions are relatively easily characterized and
have a straightforward conceptual interpretation.  

The reader uninterested in the details of these solutions can skip
to section \ref{sub:Pioneer sets - Special-cases}, where the Bell
inequalities for a particularly simple subset of them are presented.

\subsection{Definition\label{sub:Pioneer-set-Definition}}

Given an event space $\dsTwo_{V}\equiv\left(V,M_{V},N_{\cup M_{V}}\right)$,
where $M_{i}=\left\{ p_{i},q_{i}\right\} $ and $N_{\mu_{i}}=\left\{ 1_{\mu_{i}},2_{\mu_{i}}\right\} $
for all $i\in V$ and $\mu_{i}\in M_{i}$, a pioneer set is characterized
by a pair $\left(Z,S_{Z}\right)$, where
\begin{enumerate}
\item $Z$ is a partition of $V$.
\item $S_{Z}$ is an indexed family of sets, where $S_{z}\subseteq\powerset z$,
for each $z\in Z$.
\item For each $z\in Z$, and each $i,j\in z$ where $i\ne j$, there is
a sequence of elements of $S_{z}$ such that $i$ is in the first
element, $j$ is in the last, and every pair of consecutive elements
has a non-empty intersection.%
\footnote{For example, if $z=\left\{ A,B,C\right\} $, then $S_{z}=\left(\left\{ A,B\right\} ,\left\{ B,C\right\} \right)$
would satisfy this requirement, while $S_{z}=\left(\left\{ A,B\right\} ,\left\{ C\right\} \right)$
would not.%
}
\end{enumerate}

\subsubsection{The odd-out transformation\label{sub:The-odd-out-transformation}}

The odd-out transformation applies to subsets of a powerset. 

Given a set $z$ and a set $S\subseteq\powerset z$, the odd-out transformation,
$S^{*}$, is
\begin{equation}
S^{*}\equiv\left\{ \sigma\in\powerset z\;:\;\left|S\cap\powerset{\sigma}\right|\textrm{is odd}\right\} 
\end{equation}
Note that $S^{**}=S$.

The odd-out transforms of the sets in the indexed family $S_{Z}$
will be represented $S_{Z}^{*}$.

\subsubsection{Relabeling outcomes; condition 1b of the TBIC}

Each element in the omni-joint outcome space $\SampleOmni$ can be
represented succinctly by two subsets $\sigma,\rho\subseteq V$:
\begin{equation}
\left(\sigma,\rho\right)\longleftrightarrow\itp 1_{\itpp_{V\backslash\left(\sigma\cminus\rho\right)}\itpq_{V\backslash\rho}}\itp 2_{\itpp_{\left(\sigma\cminus\rho\right)}\itpq_{\rho}}
\end{equation}
With this convention, \eqref{eq:LS-equations}, which is equivalent
to 1b of the TBIC, takes a particularly simple form:
\begin{equation}
\function f{\sigma,\rho}+\function f{\sigma,\rho\cminus\left\{ i\right\} }=\function f{\sigma\cminus\left\{ i\right\} ,\rho}+\function f{\sigma\cminus\left\{ i\right\} ,\rho\cminus\left\{ i\right\} }\label{eq:LS-equations-simplified}
\end{equation}
for any $i\in V$ and $\sigma,\rho\seq V$.

\subsubsection{The pioneer set}

Let $\pioneerset$ be the pioneer set characterized by $\left(Z,S_{Z}\right)$.
Then 
\begin{equation}
\left(\sigma,\rho\right)\in\pioneerset\longleftrightarrow\forall z\in Z\left[\left(\sigma\cap z,\rho\cap z\right)\in\pioneerset_{z}\right]
\end{equation}
where
\begin{equation}
\left(\mu,\nu\right)\in\pioneerset_{z}\longleftrightarrow\left(\mu\in S_{z}^{*}\leftrightarrow\left|\nu\right|\textrm{is odd}\right)\label{eq:Pioneer-set-In-out}
\end{equation}

\subsection{The corresponding inequalities}

Proof that pioneer sets define tight Bell inequalities is given in
Appendix \ref{sec:Proof-that-pioneer-satisfy-the-TBIC}.

\subsubsection{General case}

To specify the inequalities, which are given by \eqref{eq:BI in terms of MDs}
and \eqref{eq:BI in terms of MDs-W}, we need only give the multideviation
of $f$ for each element of $\Psi$:
\begin{equation}
\multideviation{\itpp_{\sigma\backslash\rho}\itpq_{\rho}}f{\itp 1_{\itpp_{\sigma\backslash\rho}\itpq_{\rho}}}{PQ}=\frac{1}{2^{\text{\ensuremath{\left|Z\right|}}}}\prod_{z\in Z}\left(\delta_{z\cap\sigma=\emptyset}+\frac{1}{2^{\left|z\right|}}\delta_{z\backslash\sigma=\emptyset}\sum_{\mu\in\powerset z}\left(-1\right)^{\left|\mu\backslash\rho\right|}\left(-1\right)^{\delta_{\mu\in S_{z}^{*}}}\right)
\end{equation}
where $\sigma\seq V$ and $\rho\seq\sigma$. 

When $\left|Z\right|>1$, the inequality is a straightforward composition
of lower-level inequalities. In other words, inequalities for two
sets of observers $V$ and $V^{'}$ always define an inequality for
$V\un V^{'}$.%
\footnote{This is likely true in general, not just for pioneer sets. Note that
this is a stronger claim than that found in \citet{Pironio2005},
which only applies to cases where either $\left|V\right|=1$ or $\left|V^{'}\right|=1$.%
}

\subsubsection{Special cases; the even/odd inequalities\label{sub:Pioneer sets - Special-cases}}

Genuinely new structure for each $V$ is found when $Z=\left\{ V\right\} $.
In this case, the inequality coefficients take a simpler form:
\begin{equation}
\multideviation{\itpp_{\sigma\backslash\rho}\itpq_{\rho}}f{\itp 1_{\itpp_{\sigma\backslash\rho}\itpq_{\rho}}}{PQ}=\frac{1}{2}\left(\delta_{\sigma=\emptyset}+\frac{1}{2^{\left|V\right|}}\delta_{\sigma=V}\sum_{\mu\in\powerset V}\left(-1\right)^{\left|\mu\backslash\rho\right|}\left(-1\right)^{\delta_{\mu\in S^{*}}}\right)\label{eq:BIs-top-level}
\end{equation}
Note that, aside from the constant $\multideviation{\emptyset}{\none}{\none}{\none}$,
only the top-level multideviations ($\sigma=V$) will appear in the
inequality. As I will show in section \ref{sec:Philosophical-importance-of-EO-inqs},
this has important philosophical consequences.

For reasons that will be made clear in section \ref{sec:Intepretation-of-the-even/odd-inequalities},
I refer to these as \emph{even/odd inequalities}.

From this, a series of particularly simple inequalities can be derived.
For any $\varphi\seq V$ and $m\in\left\{ 0,1\right\} $,

\begin{equation}
\frac{1}{2}+\left(-1\right)^{m}\left(2^{\left|V\right|-1}\multideviation{\itpp_{V\backslash\varphi}\itpq_{\varphi}}P{\none}{\none}-\sum_{\rho\in\powerset V}\multideviation{\itpp_{V\backslash\rho}\itpq_{\rho}}P{\none}{\none}\right)\geq0\label{eq:Simplest BIs}
\end{equation}
where $\multideviation{\mu}P{\none}{\none}=\multideviation{\mu}P{\itp 1_{\mu}}{\none}$
by convention. For general event spaces, substitute $\multideviation{\mu}P{\itp 1_{\mu}}{\none}\rightarrow\lmultideviation{\mu}P{\itpalpha_{\mu}}{\none}$,
as in \eqref{eq:BI in terms of MDs-W}.

When $\left|V\right|=2$, \eqref{eq:Simplest BIs} reduces to the
CHSH inequality (see section \ref{sub:CHSH}).

If there were no restrictions on the base states, then the $\multideviation{\itpp_{V\backslash\rho}\itpq_{\rho}}P{\none}{\none}$
could range between $\pm\frac{1}{2^{\left|V\right|}}$ independently
of one another. The left-hand side of \eqref{eq:Simplest BIs} would
then be able to reach $-1$, the maximal violation allowed by the
probability calculus. 

In section \ref{sec:Violations in QM}, I will show that quantum mechanics
predicts a violation of each of these inequalities.

\subsection{Counts}

Some data on the pioneer sets is given in table \ref{tab:data on pioneer sets}.
The total number grows roughly as $2^{2^{\left|V\right|}}$, and those
that show genuinely new structure for a given number of observers
(i.e., those where $Z=\left|V\right|$) quickly come to dominate.
\begin{table}[h]
\begin{centering}
\begin{tabular}{|c|c|c|c|}
\hline 
$\left|V\right|$ & \# of pioneer sets & \# with $Z=\left\{ V\right\} $ & $2^{2^{\left|V\right|}}$\tabularnewline
\hline 
\hline 
2 & 24 & 8 & 16\tabularnewline
\hline 
3 & 352 & 192 & 256\tabularnewline
\hline 
4 & 67,968 & 63680 & 65536\tabularnewline
\hline 
5 & $\sim4.296\times10^{9}$ & $\sim4.294\times10^{9}$ & $\sim4.295\times10^{9}$\tabularnewline
\hline 
6 & $\sim1.845\times10^{19}$ & $\sim1.845\times10^{19}$ & $\sim1.845\times10^{19}$\tabularnewline
\hline 
\end{tabular}
\par\end{centering}

\caption{\label{tab:data on pioneer sets}Counts of pioneer sets by number
of observers.}
\end{table}

\section{Conceptual intepretation of the even/odd inequalities\label{sec:Intepretation-of-the-even/odd-inequalities}}

The interpretation of the binary multideviations given in section
\ref{sub:Even-and-odd} allows a straightforward interpretation of
the even/odd inequalities, along the lines described in section \ref{sub:CHSH}
for the CHSH inequality. The new inequalities represent limits on
how incompatible the even or odd statistics for the joint measurement
contexts can be with one another.

Let a distribution be \emph{odd-definite} if it will, with certainty,
produce a joint outcome with an odd number of 2's, and \emph{even-definite}
if it will, with certainty, produce a joint outcome with an even number
of 2's.%
\footnote{That is, the distribution $P_{\itp{\mu}}$ is odd- or even-definite
if $\function{Pr_{\itp{\mu}}}{\textrm{even \# of \ensuremath{V}\ outcomes are }2}$
is $0$ or $1$, respectively.%
} Then the even/odd inequalities express logical connections between
the odd/even-definiteness of the distributions for different measurement
contexts.

To see how these connections arise, consider a joint measurement of
three binary observables, labeled 1, 2, and 3. If the distribution
is even-definite in observables 1 and 2 and even-definite in observables
1 and 3, then it is necessarily even-definite in observables 2 and
3. For example, suppose the results of three coins being flipped are
even-definite in coins 1 and 2, meaning that the joint outcome must
include either $H_{1}H_{2}$ or $T_{1}T_{2}$. Suppose further that
the results are even-definite in 1 and 3, so that the joint outcome
must include $H_{1}H_{3}$ or $T_{1}T_{3}$. The total joint outcome
must thus be either $H_{1}H_{2}H_{3}$ or $T_{1}T_{2}T_{3}$, and
the results must thus be even-definite in 2 and 3. In other words,
the requirements that the results be even-definite in 12 and also
in 13 imply that the results are also even-definite in 23.

We could have phrased this constraint just as easily in terms of odd-definiteness:
it cannot be the case that an odd number of pairs of 1, 2, and 3 are
odd-definite. There will be four such constraints---one for each subset
of $\left\{ 12,13,23\right\} $ with an odd number of elements.

These limits can be represented through a simple graphical method,
depicted in fig. \ref{fig:Even-odd compatibility}. They represent
possible deterministic arrangements, where the outcome of each observable
is represented as $\pm1$. The multideviations for pairs of observables
are determined by multiplying the values for the corresponding observables.
Certain arrangements among the pairwise multideviations cannot be
formed in this way, namely, those where an odd number have the value
$-1$. These represent logical restrictions on the distribution. 
\begin{figure}
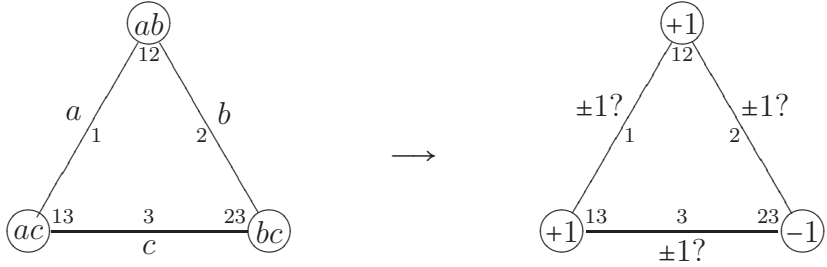

\[
\xy 
(0,-10)="O";
(8,0)="X2";
(0,13.8)="Y2";
{"Y2"+"Y2"}="Y";
{"X2"+"X2"}="X";
{"O"+"Y"}*+{ab}*\frm{o}="V12";
{"O"-"X"}*+{ac}*\frm{o}="V13";
{"O"+"X"}*+{bc}*\frm{o}="V23";
"V12"!(0,-2)*+!U{\scriptstyle{12}};
"V13"!(2,2)*+!L{\scriptstyle{13}};
"V23"!(-2,2)*+!R{\scriptstyle{23}};
"V12";"V13"**\dir{-}?*!(2,-2){a}?*!(-1,1){\scriptstyle{1}};
"V12";"V23"**\dir{-}?*!(-2,-2){b}?*!(1,1){\scriptstyle{2}};
"V13";"V23"**\dir{-}?*!(0,2){c}?*!(0,-2){\scriptstyle{3}};
\endxy
\;\;\;\;\;\;\;\;\;\;\;\;
\longrightarrow
\;\;\;\;\;\;\;\;\;\;\;\;
\xy 
(0,-10)="O";
(8,0)="X2";
(0,13.8)="Y2";
{"Y2"+"Y2"}="Y";
{"X2"+"X2"}="X";
{"O"+"Y"}*+{+1}*\frm{o}="V12";
{"O"-"X"}*+{+1}*\frm{o}="V13";
{"O"+"X"}*+{-1}*\frm{o}="V23";
"V12"!(0,-2)*+!U{\scriptstyle{12}};
"V13"!(2,2)*+!L{\scriptstyle{13}};
"V23"!(-2,2)*+!R{\scriptstyle{23}};
"V12";"V13"**\dir{-}?*!(3,-3){\pm1?}?*!(-1,1){\scriptstyle{1}};
"V12";"V23"**\dir{-}?*!(-3,-3){\pm1?}?*!(1,1){\scriptstyle{2}};
"V13";"V23"**\dir{-}?*!(0,2.5){\pm1?}?*!(0,-2){\scriptstyle{3}};
\endxy 
\]

\caption{\label{fig:Even-odd compatibility}Even/odd compatibility for 3 observables.
A vertex arrangement of $\pm1$ can be generated by assigning $\pm1$
to each side and then multiplying the adjacent values, as in the diagram
on the left. Certain vertex arrangements of $\pm1$ cannot be constructed
in this way. The diagram on the right is an example of such an arrangement.}
\end{figure}

The even/odd inequalities are logical limitations in precisely the
same way. For the 2-observer case, the corresponding graph is a square,
and the impossible arrangements are also those for which an odd number
of vertices have a $-1$. There are 8 such arrangements, corresponding
to the 8 CHSH inequalities.

For more observers, the same restriction holds---the distribution
cannot be odd-definite in an odd number of contexts. However, for
3 observers or more, there can also be more complicated forms of even/odd
incompatibility. One such arrangement for 3 observers is depicted
in fig. \ref{fig:Even-odd-limitation-for-three-observers}. 

\begin{figure}
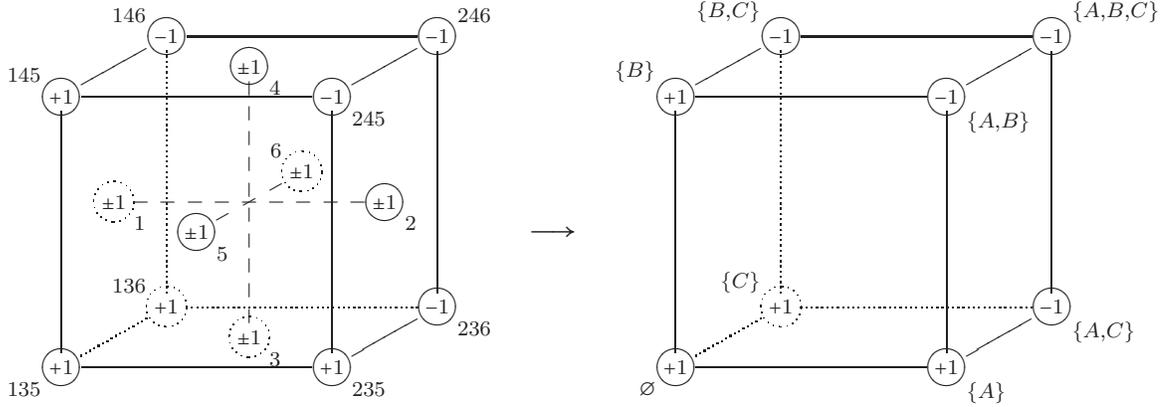

\begin{centering}
\begin{center}
\[
\xy 
(0,0)="O";
(9,0)="X2";
(0,9)="Y2";
"X2"+"X2"="X";
"Y2"+"Y2"="Y";
(14,0)="dX";
(0,8)="dY";
{"O"+"X"+"Y"}*+{\scriptstyle{-1}}*\frm{o}="bba";
{"O"+"X"-"Y"}*+{\scriptstyle{+1}}*\frm{o}="baa";
{"O"-"X"+"Y"}*+{\scriptstyle{+1}}*\frm{o}="aba";
{"O"-"X"-"Y"}*+{\scriptstyle{+1}}*\frm{o}="aaa";
{"O"+"X"+"Y"+"dX"+"dY"}*+{\scriptstyle{-1}}*\frm{o}="bbb";
{"O"+"X"-"Y"+"dX"+"dY"}*+{\scriptstyle{-1}}*\frm{o}="bab";
{"O"-"X"+"Y"+"dX"+"dY"}*+{\scriptstyle{-1}}*\frm{o}="abb";
{"O"-"X"-"Y"+"dX"+"dY"}*+{\scriptstyle{+1}}*\frm{.o}="aab";
"aaa"!DL*!UR{\scriptstyle{135}};
"aba"!UL*!DR{\scriptstyle{145}};
"baa"!DR*!UL{\scriptstyle{235}};
"bba"!DR*!UL{\scriptstyle{245}};
"aab"!UL*!DR{\scriptstyle{136}};
"abb"!UL*!DR{\scriptstyle{146}};
"bab"!DR*!UL{\scriptstyle{236}};
"bbb"!UR*!DL{\scriptstyle{246}};
"aaa";"abb"**@{}?*{\scriptstyle{\pm1}}*+\frm{.o}="acc";
"baa";"bbb"**@{}?*+{\scriptstyle{\pm1}}*\frm{o}="bcc";
"aaa";"bab"**@{}?*{\scriptstyle{\pm1}}*+\frm{.o}="cac";
"aba";"bbb"**@{}?*+{\scriptstyle{\pm1}}*\frm{o}="cbc";
"aaa";"bba"**@{}?*{\scriptstyle{\pm1}}*+\frm{o}="cca";
"aab";"bbb"**@{}?*+{\scriptstyle{\pm1}}*\frm{.o}="ccb";
"acc"!DR*!UL{\scriptstyle{1}};
"bcc"!DR*!UL{\scriptstyle{2}};
"cac"!DR*!UL{\scriptstyle{3}};
"cbc"!DR*!UL{\scriptstyle{4}};
"cca"!DR*!UL{\scriptstyle{5}};
"ccb"!UL*!DR{\scriptstyle{6}};
"aaa";"aba"**@{-};
"aab";"abb"**@{.};
"baa";"bba"**@{-};
"bab";"bbb"**@{-};
"aaa";"baa"**@{-};
"aba";"bba"**@{-};
"aab";"bab"**@{.};
"abb";"bbb"**@{-};
"aaa";"aab"**@{.};
"baa";"bab"**@{-};
"aba";"abb"**@{-};
"bba";"bbb"**@{-};
"acc";"bcc"**@{--};
"cac";"cbc"**@{--};
"cca";"ccb"**@{--};
\endxy
\;\;\;\;
\longrightarrow
\;\;\;\;
\xy 
(0,0)="O";
(9,0)="X2";
(0,9)="Y2";
"X2"+"X2"="X";
"Y2"+"Y2"="Y";
(14,0)="dX";
(0,8)="dY";
{"O"+"X"+"Y"}*+{\scriptstyle{-1}}*\frm{o}="bba";
{"O"+"X"-"Y"}*+{\scriptstyle{+1}}*\frm{o}="baa";
{"O"-"X"+"Y"}*+{\scriptstyle{+1}}*\frm{o}="aba";
{"O"-"X"-"Y"}*+{\scriptstyle{+1}}*\frm{o}="aaa";
{"O"+"X"+"Y"+"dX"+"dY"}*+{\scriptstyle{-1}}*\frm{o}="bbb";
{"O"+"X"-"Y"+"dX"+"dY"}*+{\scriptstyle{-1}}*\frm{o}="bab";
{"O"-"X"+"Y"+"dX"+"dY"}*+{\scriptstyle{-1}}*\frm{o}="abb";
{"O"-"X"-"Y"+"dX"+"dY"}*+{\scriptstyle{+1}}*\frm{.o}="aab";
"aaa"!DL*!UR{\scriptstyle{\emptyset}};
"aba"!UL*!DR{\scriptstyle{\{B\}}};
"baa"!DR*!UL{\scriptstyle{\{A\}}};
"bba"!DR*!UL{\scriptstyle{\{A,B\}}};
"aab"!UL*!DR{\scriptstyle{\{C\}}};
"abb"!UL*!DR{\scriptstyle{\{B,C\}}};
"bab"!DR*!UL{\scriptstyle{\{A,C\}}};
"bbb"!UR*!DL{\scriptstyle{\{A,B,C\}}};
"aaa";"aba"**@{-};
"aab";"abb"**@{.};
"baa";"bba"**@{-};
"bab";"bbb"**@{-};
"aaa";"baa"**@{-};
"aba";"bba"**@{-};
"aab";"bab"**@{.};
"abb";"bbb"**@{-};
"aaa";"aab"**@{.};
"baa";"bab"**@{-};
"aba";"abb"**@{-};
"bba";"bbb"**@{-};
\endxy
\]
\par\end{center}
\par\end{centering}

\begin{centering}

\par\end{centering}

\caption{\label{fig:Even-odd-limitation-for-three-observers}Example of an
even-odd limitation for three observers, where $\itpp=\left\{ 1,3,5\right\} $
and $\itpq=\left\{ 2,4,6\right\} $. The vertices represent joint
measurement contexts, and the distribution is even-definite if $+1$
is assigned and odd-definite if $-1$ is assigned. There is no assignment
of $\pm1$ to the facets of the cube such that the product of adjacent
facets at each vertex produces the above arrangement. The joint measurement
contexts can be labeled by subsets of $\left\{ A,B,C\right\} $. The
set of odd-definite contexts is then $\left\{ \left\{ A,B\right\} ,\left\{ A,C\right\} ,\left\{ B,C\right\} ,\left\{ A,B,C\right\} \right\} $,
and the odd-out transform is $\left\{ \left\{ A,B\right\} ,\left\{ A,C\right\} ,\left\{ B,C\right\} \right\} $,
which defines the corresponding pioneer set. The corresponding Bell
inequality is given by \eqref{eq:BIs-top-level}: $\frac{1}{4}-\simplemult{\left\{ 1,3,5\right\} }+\simplemult{\left\{ 1,4,6\right\} }+\simplemult{\left\{ 2,3,6\right\} }+\simplemult{\left\{ 2,4,5\right\} }\geq0$.}

\end{figure}

The graphical method generalizes straightforwardly to $n$ observers,
where the relevant graph is an $n$-dimensional hypercube. Each even/odd
inequality corresponds to an assignment of $\pm1$ to the vertices
that cannot be generated by assigning $\pm1$ to each facet and placing
the product of adjacent facets at each vertex. The profile $S$ which
defines the pioneer set for the inequality is just the odd-out transform
of the set of odd-definite contexts, when each is labeled as a subset
of the set of observers (as in fig. \ref{fig:Even-odd-limitation-for-three-observers}). 

Indeed, every inequality defined by the pioneer sets represents a
kind of even/odd incompatibility, although only the even/odd inequalities
can be represented so easily.

\section{Philosophical importance of the even/odd inequalities\label{sec:Philosophical-importance-of-EO-inqs}}

The even/odd inequalities are interesting not merely because they
have a convenient conceptual interpretation, but also because they
provide the opportunity for a stronger version of Bell's theorem.
Because these inequalities concern degrees of freedom that are independent
of those involved in parameter independence, they should be derivable
from conditions that do not include or imply parameter independence.
Violations of the even/odd inequalities could thus provide more specific
demands on which classical concepts must be given up. It is not my
intention here to provide such a derivation, merely to show that this
is possible.

As above, I will focus on the binary case, $\dsTwo_{V}$, with the
proviso that all results hold for the general case under the substitution
$\multideviation{\mu}P{\itp 1_{\mu}}{\none}\rightarrow\lmultideviation{\mu}P{\itpalpha_{\mu}}{\none}$,
as in \eqref{eq:BI in terms of MDs-W}. In the binary case, parameter
independence is
\begin{equation}
\forall\rho\supseteq\sigma\left[\multideviation{\sigma}{\itpp}{\none}{\none}=\multideviation{\sigma}{\itpp_{\rho}\itpq_{V\backslash\rho}}{\none}{\none}\right]
\end{equation}
and outcome independence is
\begin{equation}
\multideviation{\sigma}{\itpp}{\none}{\none}=\frac{1}{2^{\left|V\right|}}\prod_{i\in\sigma}\left(2^{\left|V\right|}\multideviation{\left\{ i\right\} }{\itpp}{\none}{\none}\right)
\end{equation}
where $\multideviation{\sigma}{\itp{\mu}}{\none}{\none}\equiv\multideviation{\itp{\mu}_{\sigma}}{P_{\itp{\mu}}}{\itp 1_{\itp{\mu}_{\sigma}}}{\none}$.

The important thing to notice is that parameter independence does
not affect any of the multideviations of the form $\multideviation V{\itp{\mu}}{\none}{\none}$
(i.e., when $\sigma=V$). On the other hand, outcome independence
does affect these degrees of freedom. Indeed, it is through the combination
of the two that the various $\multideviation V{\itp{\mu}}{\none}{\none}$
are related to one another; outcome independence relates $\multideviation V{\itp{\mu}}{\none}{\none}$
to the $\multideviation{\left\{ i\right\} }{\itp{\mu}}{\none}{\none}$
within a given measurement context, and parameter independence relates
the $\multideviation{\left\{ i\right\} }{\itp{\mu}}{\none}{\none}$
between measurement contexts. Fig. \ref{fig:Degrees-of-freedom-for-mcd-in-2x2x2}
depicts these relationships for the 2 observer case.

In the notation used in this section, the even/odd inequalities are
\begin{equation}
1+\sum_{\rho\in\powerset V}\multideviation V{\itpp_{V\backslash\rho}\itpq_{\rho}}{\none}{\none}\left(\sum_{\mu\in\powerset V}\left(-1\right)^{\left|\mu\backslash\rho\right|}\left(-1\right)^{\delta_{\mu\in S^{*}}}\right)\ge0
\end{equation}
where $S^{*}$ is the odd-out transform of the profile set $S\seq\powerset V$
(see \ref{sub:Pioneer-set-Definition}). It is clear that the inequalities
concern only multideviations of the form $\multideviation V{\itp{\mu}}{\none}{\none}$.
Thus, violations of the inequalities have nothing directly to do with
parameter independence. They test a type of locality (or other classical
concept) that does not involve effects of the choice of measurement
context.

The even/odd inequalities provide an opportunity; one could, in theory,
derive them from a condition placed solely on multideviations of the
form $\multideviation V{\itp{\mu}}{\none}{\none}$. Since these inequalities
are violated by quantum mechanics (to be shown in the next section),
one could then conclude that this condition cannot be satisfied by
any theory aiming to reproduce quantum statistics. Since this condition
would be manifestly independent of parameter independence, the result
would be strictly stronger than the existing Bell's theorem beginning
with parameter and outcome independence.

The trick, of course, is to find such a condition with a natural physical
interpretation. Because multideviations are new quantities without
well-established interpretations, there is not an obvious candidate
at this time.

\section{Quantum mechanics\label{sec:Violations in QM}}

I will now show that the inequalities presented in section \ref{sub:Pioneer sets - Special-cases}
are violated by quantum mechanics. Contrary to what one might expect,
the size of the violations increases with the number of observers
and converges toward the theoretical maximum.

\subsection{Experimental setup; initial state}

A set of observers, $V=\left\{ A,B,C,...\right\} $, each performs
one of two possible spin measurements, $M_{i}=\left\{ \theta_{i,0},\theta_{i,1}\right\} $,
in the $xz$-plane on spin-$\frac{1}{2}$ particles emitted from a
central source. There are thus two possible outcomes for each measurement,
and the event space has the structure of $\dsTwo_{V}$ (see section
\ref{sub:Pioneer-set-Definition}). 

The observables $\theta_{i,n}$ correspond to angles in the $xz$-plane,
where 0 represents the positive direction of the $z$-axis and $\frac{\pi}{2}$
represents the positive direction of the $x$-axis. The elements of
the outcome set, $N_{\theta_{i,n}}=\left\{ 1_{\theta_{i,n}},2_{\theta_{i,n}}\right\} $,
represent spin-up and spin-down, respectively, for the observable
$ $$\theta_{i,n}$. 

The overall Hilbert space for the experiment is the tensor product
of the 2-dimensional Hilbert spaces for each observer. States and
operators will be expressed in the positive $z$-basis of each subspace,
$\left\{ \ket{1_{i}},\ket{2_{i}}\right\} $ for observer $i$.

The initial state will be prepared in an ``even-correlation'' state:
\begin{equation}
\ket{\psi}=\frac{1}{\sqrt{2^{\left|V\right|-1}}}\underset{\left|\sigma\right|\text{ even}}{\sum_{\sigma\in\powerset V}}\left(-1\right)^{\frac{\left|\sigma\right|}{2}}\prod_{i\in V}\left(\delta_{i\notin\sigma}\ket{1_{i}}+\delta_{i\in\sigma}\ket{2_{i}}\right)
\end{equation}
If all observers measure along the positive $z$-axis, then the joint
outcome will always have an even number of spin-downs. As shown in
the next section, this state has a remarkably simple multideviation
profile, regardless of which spin-orientations are measured.

It is worth keeping in mind the difference between this state and
the generalized GHZ state, which is often used to represent multi-party
entanglement:
\begin{equation}
\ket{\psi_{\pm}}=\frac{1}{\sqrt{2}}\left(\left(\prod_{i\in V}\ket{1_{i}}\right)\pm\left(\prod_{i\in V}\ket{2_{i}}\right)\right)
\end{equation}
The GHZ states are perfectly correlated in a pairwise way. If any
two observers measure along the $z$-axis, then they will get the
same result. The multideviation profile for the GHZ state is significantly
more complicated than for the even-correlation state (only odd-order
multideviations vanish), especially for arbitrary spin-orientations.
Whether the GHZ states violate any of the top-level inequalities specified
by \eqref{eq:BIs-top-level} is unclear (for 3 or more observers).

\subsection{Measurement results}

A joint measurement context, $\itp{\mu}\in\Pi M_{V}$, can be specified
by the intuple $\itp m\in\prod_{i\in V}\left\{ 0,1\right\} $, where
$\mu_{i}=\theta_{i,m_{i}}$. We will thus consider the measurement
context to be a function of $\itp m$: $\function{\itp{\mu}}{\itp m}$.

The joint probabilities are 
\begin{equation}
\mcprobability{}{\function{\itp{\mu}}{\itp m}}{\itpx}=\frac{1}{2^{\left|V\right|}}\left(1+\left(-1\right)^{\left|\itpx\cap\itp 2\right|}\function{cos}{\sum_{i\in V}\theta_{i,m_{i}}}\right)
\end{equation}
The multideviations are
\begin{align}
\multideviation{\emptyset}{P_{\itp{\mu}}}{\itpx}{\itp{\mu}} & =\frac{1}{2^{\left|V\right|}}\\
\multideviation{\itp{\mu}}{P_{\itp{\mu}}}{\itpx}{\itp{\mu}} & =\frac{\left(-1\right)^{\left|\itpx\cap\itp 2\right|}}{2^{\left|V\right|}}\function{cos}{\sum_{i\in V}\theta_{i,m_{i}}}
\end{align}
where $\itp{\mu}=\function{\itp{\mu}}{\itp m}$. All other multideviations
vanish. 

Because the multideviations for different $\itpx$ differ only by
(at most) a factor of -1, we will focus only on $\multideviation{\sigma}{P_{\itp{\mu}}}{\itp 1}{\itp{\mu}}$.

\subsection{Violation of the simplest Even/Odd inequalities}

The simplest even/odd inequalities, given by \eqref{eq:Simplest BIs},
are indexed by $\varphi\seq V$ and $m\in\left\{ 0,1\right\} $. The
state $\ket{\psi}$ violates each one for certain choices of measurement
settings. 

Let 
\begin{align}
a_{i} & \equiv\theta_{i,\delta_{i\in\varphi}}\\
d_{i} & \equiv\frac{1}{2}\left(\theta_{i,\delta_{i\notin\varphi}}-a_{i}\right)
\end{align}
Then, after some algebraic manipulation, \eqref{eq:Simplest BIs}
becomes
\begin{equation}
\frac{1}{2}+\left(-1\right)^{\left|m\right|}\frac{1}{2}\left(\function{cos}{\sum_{i\in V}a_{i}}-2\,\function{cos}{\sum_{i\in V}\left(a_{i}+d_{i}\right)}\left(\prod_{i}^{V}\function{cos}{d_{i}}\right)\right)\geq0
\end{equation}
Now, let 
\begin{align}
a & \equiv\left(\sum_{i\in V}a_{i}\right)+\pi\delta_{m=0}\\
d_{i} & =\frac{\pi}{2\left|V\right|}\label{eq:substitution for d}
\end{align}
and the inequality becomes
\begin{equation}
\frac{1}{2}-\left(\frac{1}{2}\cos a+\sin a\:\cos^{\left|V\right|}\!\!\left(\frac{\pi}{2\left|V\right|}\right)\right)\geq0\label{eq:QM-violations}
\end{equation}
When $a=0$, the left-hand side is $0$, and the inequality is thus
satsified through equality. However, the derivative with respect to
$a$ is
\begin{equation}
-\left(-\frac{1}{2}\sin a+\cos a\:\cos^{\left|V\right|}\!\!\left(\frac{\pi}{2\left|V\right|}\right)\right)
\end{equation}
At $a=0$, this is $-\cos^{\left|V\right|}\!\!\left(\frac{\pi}{2\left|V\right|}\right)$,
which is manifestly negative. Thus, for values of $a$ slightly larger
than $0$, the inequality is violated.

\subsection{Maximal violations}

If, instead of \eqref{eq:substitution for d}, we assume $d_{i}=\frac{d}{\left|V\right|}$,
then we get the inequality
\begin{equation}
\frac{1}{2}-\left(\frac{1}{2}\cos a-\cos\left(a+d\right)\:\cos^{\left|V\right|}\!\!\left(\frac{d}{\left|V\right|}\right)\right)\geq0
\end{equation}
This expression is minimized over variations in $a$ when $a=\pi-d\left(\frac{\left|V\right|+1}{\left|V\right|}\right)$.
Then it becomes
\begin{equation}
\frac{1}{2}-\left(-\frac{1}{2}\cos\left(d+\frac{d}{\left|V\right|}\right)+\cos^{\left|V\right|+1}\!\!\left(\frac{d}{\left|V\right|}\right)\right)\label{eq:QM-Violations-with-d-open}
\end{equation}
Finding the minimum is difficult analytically, but a numerical search
is straightforward (see table \ref{tab:Maximal-QM-violations}). 

\begin{table}[h]
\begin{centering}
\begin{tabular}{|c|c|c|}
\hline 
$\left|V\right|$ & $\frac{d}{\pi}$ & Value of \eqref{eq:QM-Violations-with-d-open}\tabularnewline
\hline 
\hline 
$2$ & $0.5$ & $-0.207$\tabularnewline
\hline 
$3$ & $0.588$ & $-0.333$\tabularnewline
\hline 
$4$ & $0.689$ & $-0.421$\tabularnewline
\hline 
$5$ & $0.802$ & $-0.487$\tabularnewline
\hline 
$10$ & $0.972$ & $-0.669$\tabularnewline
\hline 
$100$ & $0.997$ & $-0.953$\tabularnewline
\hline 
$1000$ & $0.999$ & $-0.999$\tabularnewline
\hline 
\end{tabular}
\par\end{centering}

\caption{\label{tab:Maximal-QM-violations}Maximal violations of \eqref{eq:QM-violations}.}

\end{table}
Most important, the maximal violation of the inequality increases
with $\left|V\right|$. As $\left|V\right|\rightarrow\infty$, the
expression converges to 
\begin{equation}
\frac{1}{2}\left(\cos d-1\right)
\end{equation}
This is obviously minimized when $d=\pi$, where it is equal to $-1$.
As noted in section \ref{sub:Pioneer sets - Special-cases}, that
is the maximal violation allowed by the probability calculus (i.e.,
where there are no restrictions on the underlying distributions).

\section{Conclusion}

The introduction of multideviations exposes some of the underlying
structure of the distributions described by Bell's theorem. In particular,
those distributions can be generated from joint distributions over
all observables by ignoring specific multideviation degrees of freedom,
namely, those involving pairs of mutually exclusive observables. Thus,
further study of multideviations should help illuminate the philosophical
importance of Bell's theorem.

The new method for finding tight Bell inequalities presented above
does reduce the computational complexity of the problem somewhat,
but not enough to keep brute force calculations from being intractable
for relatively small numbers of observers. Still, the new organization
of the problem may prompt further improvements.

The presentation of new tight Bell inequalities for arbitrary numbers
of observers, particularly the even/odd inequalities, which have relatively
simple form and admit convenient conceptual interpretation, allows
for the confirmation that quantum mechanics violates the assumptions
of Bell's theorem (however they are formulated) for any number of
systems. Furthermore, the size of this violation increases with the
number of systems and converges to the theoretical maximum.

This last fact is somewhat surprising, for two different reasons.
First, quantum effects tend to be dampened in general as the number
of systems is increased. Yet, if we take the violation of a Bell inequality
to indicate something peculiarly non-classical about an experiment,
then the effect is \emph{more} pronounced as the number of systems
increases. 

Second, the fact that quantum mechanics does not permit a maximal
violation of the CHSH inequality has sparked a significant amount
of interest.%
\footnote{For some recent work, see \citet{FilippSvozil:2005,Janotta:2011by}.%
} The hope has been that some physical principle will explain the limitation
and perhaps provide some non-empirical justification for the Schr�dinger
equation. The above result, which shows that violation of Bell inequalities
converges toward the theoretical maximum as the number of systems
is increased, suggests that this limitation is a peculiarity of lower-dimensional
systems and perhaps of less fundamental importance than it may seem.

Finally, the even/odd inequalities concern degrees of freedom that
are unaffected by parameter independence, raising the possibility
of a new Bell's theorem that omits this condition altogether. Such
a theorem would allow a stronger philosophical result, namely, a more
precise articulation of the classical concepts that cannot be part
of any empirically adequate future physics.

\bibliographystyle{apalike}

\bibliography{BellsTheoremRefs}

\begin{thebibliography}{}

\bibitem[Bancal et~al., 2011]{BancalGisin:2011}
Bancal, J.-D., Brunner, N., Gisin, N., and Liang, Y.-C. (2011).
\newblock {Detecting Genuine Multipartite Quantum Nonlocality: A Simple
  Approach and Generalization to Arbitrary Dimensions}.
\newblock {\em Physical Review Letters}, 106(2):4.

\bibitem[Bell, 1964]{Bell64}
Bell, J.~S. (1964).
\newblock On the {Einstein-Podolsky-Rosen} paradox.
\newblock {\em Physics}, I:195--200.

\bibitem[Bell, 1966]{Bell1966}
Bell, J.~S. (1966).
\newblock On the problem of hidden variables in quantum mechanics.
\newblock {\em Rev. Mod. Phys.}, 38(3):447--452.

\bibitem[Bj{\"o}rner et~al., 1999]{Bjorner:1999uq}
Bj{\"o}rner, A., Las~Vergnas, M., Sturmfels, B., White, N., and Ziegler, G.
  (1999).
\newblock {\em Oriented matroids}, volume~46 of {\em Encyclopedia of
  Mathematics and Its Applications}.
\newblock Cambridge University Press, 2nd edition.

\bibitem[Butterfield, 1992]{Butterfield92}
Butterfield, J. (1992).
\newblock Bell's theorem: {What it takes}.
\newblock {\em British Journal for the Philosophy of Science}, 43(1):41--83.

\bibitem[Clauser and Horne, 1974]{ClauserHorne1974}
Clauser, J.~F. and Horne, M.~A. (1974).
\newblock Experimental consequences of objective local theories.
\newblock {\em Phys. Rev. D}, 10(2):526--535.

\bibitem[Collins and Gisin, 2004]{CollinsGisin:2004}
Collins, D. and Gisin, N. (2004).
\newblock {A relevant two qubit Bell inequality inequivalent to the CHSH
  inequality}.
\newblock {\em Journal of Physics A: Mathematical and General},
  37(5):1775--1787.

\bibitem[Collins et~al., 2002]{CollinsGisinEtAl:NBodySeparability:2002}
Collins, D., Gisin, N., Popescu, S., Roberts, D., and Scarani, V. (2002).
\newblock {B}ell-type inequalities to detect true n-body nonseparability.
\newblock {\em Physical Review Letters}, 88(17).

\bibitem[Filipp and Svozil, 2005]{FilippSvozil:2005}
Filipp, S. and Svozil, K. (2005).
\newblock {Tracing the bounds on Bell-type inequalities}.
\newblock In {\em AIP Conference Proceedings: Foundations of Probability and
  Physics - 3}, pages 87--94. AIP.

\bibitem[Fine, 1982]{Fine1982}
Fine, A. (1982).
\newblock Hidden variables, joint probability, and the {B}ell inequalities.
\newblock {\em Physical Review Letters}, 48(5):291--295.

\bibitem[Fogel, 2011]{Fogel2011}
Fogel, B. (2011).
\newblock Multiple-context event spaces and distributions: A new framework for
  {B}ell's theorems.
\newblock In progress.

\bibitem[Garg and Mermin, 1982]{GargMermin:1982}
Garg, A. and Mermin, N. (1982).
\newblock {Correlation Inequalities and Hidden Variables}.
\newblock {\em Physical Revew Letters}, 49:1220--1223.

\bibitem[Greenberger et~al., 1990]{GHZS:1990}
Greenberger, D., Horne, M., Shimony, A., and Zeilinger, A. (1990).
\newblock {B}ell's theorem without inequailities.
\newblock {\em American Journal of Physics}, 58(12):1131--1143.

\bibitem[Janotta et~al., 2011]{Janotta:2011by}
Janotta, P., Gogolin, C., Barrett, J., and Brunner, N. (2011).
\newblock {Limits on nonlocal correlations from the structure of the local
  state space}.
\newblock {\em New Journal of Physics}, 13(6):063024.

\bibitem[Mermin, 1990]{Mermin:1990kc}
Mermin, N. (1990).
\newblock {Simple unified form for the major no-hidden-variables theorems}.
\newblock {\em Physical Review Letters}, 65(27):3373--3376.

\bibitem[Muller and Placek, 2001]{MullerPlacek:2001Synthese}
Muller, T. and Placek, T. (2001).
\newblock Against a minimalist reading of {B}ell's theorem: Lessons from
  {F}ine.
\newblock {\em Synthese}, 128(3):343--379.

\bibitem[Oxley, 2011]{Oxley:2011fk}
Oxley, J.~G. (2011).
\newblock {\em Matroid theory}, volume~21 of {\em Oxford graduate texts in
  mathematics}.
\newblock Oxford University Press, Oxford, 2nd ed edition.

\bibitem[Peres, 1999]{Peres99}
Peres, A. (1999).
\newblock All the {B}ell inequalities.
\newblock {\em Foundations of Physics}, 29(4):589--614.

\bibitem[Pironio, 2005]{Pironio2005}
Pironio, S. (2005).
\newblock Lifting {B}ell inequalities.
\newblock {\em Journal of Mathematical Physics}, 46(6):062112.

\bibitem[Pitowsky, 1989]{Pitowsky1989}
Pitowsky (1989).
\newblock Quantum probability -- quantum logic.
\newblock Springer-Verlag.

\bibitem[Pitowsky, 1991]{Pitowsky:1991}
Pitowsky, I. (1991).
\newblock Correlation polytopes: Their geometry and complexity.
\newblock {\em Mathematical Programming}, 50:395--414.

\bibitem[Pitowsky and Svozil, 2001]{PitwoskySvozil2001}
Pitowsky, I. and Svozil, K. (2001).
\newblock Optimal tests of quantum nonlocality.
\newblock {\em Phys. Rev. A}, (014102):4.

\bibitem[Svetlichny, 1987]{Svetlichny:1987ui}
Svetlichny, G. (1987).
\newblock {Distinguishing three-body from two-body nonseparability by a
  Bell-type inequality}.
\newblock {\em Physical Review D}, 35(10):3066--3069.

\bibitem[Svetlichny et~al., 1988]{SvetlichnyEtAl1988}
Svetlichny, G., Redhead, M., Brown, H., and Butterfield, J. (1988).
\newblock Do the {B}ell inequalities require the existence of joint probability
  distributions?
\newblock {\em Philosophy of Science}, 55(3):387.

\bibitem[Uffink, 2002]{Uffink:2002fk}
Uffink, J. (2002).
\newblock Quadratic {B}ell inequalities as tests for multipartite entanglement.
\newblock {\em Physical Review Letters}, 88(230406):4.

\bibitem[Werner and Wolf, 2001]{WernerWolf2001}
Werner, R.~F. and Wolf, M.~M. (2001).
\newblock All-multipartite {B}ell-correlation inequalities for two dichotomic
  observables per site.
\newblock {\em Phys. Rev. A}, 64(3):032112.

\bibitem[{\.Z}ukowski and Brukner, 2002]{ZukowskiBrukner:2002}
{\.Z}ukowski, M. and Brukner, {\v C}. (2002).
\newblock {B}ell's theorem for general n-qubit states.
\newblock {\em Physical Review Letters}, 88(210401):4.

\end{thebibliography}

\appendix

\section{Proof that pioneer sets define tight Bell inequalities\label{sec:Proof-that-pioneer-satisfy-the-TBIC}}

\subsection{Difference function}

The following is a straightforward consequence of \eqref{eq:Pioneer-set-In-out}:
\begin{equation}
\left(\mu,\nu\right)\in\pioneerset_{z}\xor\left(\mu,\nu\cminus\left\{ i\right\} \right)\in\pioneerset_{z}\label{eq:MBP}
\end{equation}
for any $i\in z$ and $\mu,\nu\seq z$.

This allows the definition of a kind of discrete differential:
\begin{equation}
\discretediff{\mu}{\nu}i\equiv\begin{cases}
\emptyset & \left(\mu,\nu\right)\in\pioneerset_{z}\leftrightarrow\left(\mu\cminus\left\{ i\right\} ,\nu\right)\in\pioneerset_{z}\\
\left\{ i\right\}  & \left(\mu,\nu\right)\in\pioneerset_{z}\leftrightarrow\left(\mu\cminus\left\{ i\right\} ,\nu\cminus\left\{ i\right\} \right)\in\pioneerset_{z}
\end{cases}
\end{equation}
Thus,
\begin{equation}
\left(\mu,\nu\right)\in\pioneerset_{z}\leftrightarrow\left(\mu\cminus\left\{ i\right\} ,\nu\cminus\discretediff{\mu}{\nu}i\right)\in\pioneerset_{z}\label{eq:MBP-in terms of the diff function}
\end{equation}

By \eqref{eq:Pioneer-set-In-out}, it can be shown that
\begin{equation}
\discretediff{\mu}{\nu}i\equiv\begin{cases}
\emptyset & \mu\in S_{z}^{*}\leftrightarrow\mu\cminus\left\{ i\right\} \in S_{z}^{*}\\
\left\{ i\right\}  & \mu\in S_{z}^{*}\xor\mu\cminus\left\{ i\right\} \in S_{z}^{*}
\end{cases}\label{eq:difference-function-revised}
\end{equation}
Thus, $\discretediff{\mu}{\nu}i=\discretediff{\mu}{}i$. Note, also,
that $\discretediff{\mu}{\none}i=\discretediff{\mu\cminus\left\{ i\right\} }{\none}i$.

Finally, \eqref{eq:LS-equations-simplified} and \eqref{eq:MBP-in terms of the diff function}
combined with 1a of the TBIC mean that
\begin{equation}
\function f{\mu,\nu}=\function f{\mu\cminus\left\{ i\right\} ,\nu\cminus\discretediff{\mu}{\none}i}\label{eq:LS-equations-on-pioneer-sets}
\end{equation}
Thus, what we need to do is show that the difference function $\ddsymbol$
can connect every two elements of each $S_{z}$.

\subsection{Vertical slices}

We will now construct a series of sequences where consecutive elements
are related by \eqref{eq:LS-equations-on-pioneer-sets}. 

Select some $\nu$ such that $\left(\emptyset,\nu\right)\in S_{z}$
and some $\mu\seq V$. Let $\mu_{n}$ be the nth element of $\mu$.
Construct a sequence of $\left|\mu\right|$ elements according to
the following:
\begin{align}
\left(a_{0},b_{0}\right) & =\left(\emptyset,\nu\right)\\
\left(a_{n},b_{n}\right) & =\left(a_{n-1}\cminus\left\{ \mu_{n}\right\} ,b_{n-1}\cminus\discretediff{a_{n-1}}{\none}{\mu_{n}}\right)
\end{align}
Then $\function f{a_{n},b_{n}}=\function f{a_{n-1},b_{n-1}}=\function f{\emptyset,\nu}$
for all $n$.

By maintaining a constant ordering of the elements of $z$, it is
simple to show that all elements of $\pioneerset_{z}$ are partitioned
according to which element $\left(\emptyset,\nu\right)$ they can
be connected to in the above manner. Thus, we now need only show that
the elements $\left(\emptyset,\text{\ensuremath{\nu}}\right)$ can
be connected to one another.

\subsection{Horizontal slices}

We now want to show that, given any $i,j\in z$, 
\begin{equation}
\function f{\emptyset,\nu}=\function f{\emptyset,\nu\cminus\left\{ i,j\right\} }
\end{equation}
We will do so by constructing a sequence connecting the two elements. 

Select some $Y\in S_{z}$ such that $\left\{ i,j\right\} \seq Y$.
By \#3 in the definition of the pioneer set (see section \ref{sub:Pioneer-set-Definition}),
such a $Y$ must exist. Now take some $\sigma\seq Y\backslash\left\{ i,j\right\} $
and choose an arbitrary order. Let $\sigma_{n}$ be the nth element
of $\sigma$, and let $\sigma_{\leq n}$ be the first $n$ elements
of $\sigma$. 

We now construct a sequence in three parts. Let $\left(a_{0},b_{0}\right)=\left(\emptyset,\nu\right)$.

For $1\leq n\leq\left|\sigma\right|$, the sequence is similar to
that used above:
\begin{align}
\left(a_{n},b_{n}\right) & =\left(a_{n-1}\cminus\left\{ \sigma_{n}\right\} ,b_{n-1}\cminus\discretediff{a_{n-1}}{\none}{\sigma_{n}}\right)
\end{align}

The next 4 elements are:
\begin{align}
\left(a_{\left|\sigma\right|+1},b_{\left|\sigma\right|+1}\right) & =\left(\sigma\cminus\left\{ i\right\} ,\;\;\;\: b_{\left|\sigma\right|}\cminus\discretediff{\sigma}{\none}i\right)\\
\left(a_{\left|\sigma\right|+2},b_{\left|\sigma\right|+2}\right) & =\left(\sigma\cminus\left\{ i,j\right\} ,b_{\left|\sigma\right|+1}\cminus\discretediff{\sigma\cminus\left\{ i\right\} }{\none}j\right)\\
\left(a_{\left|\sigma\right|+3},b_{\left|\sigma\right|+3}\right) & =\left(\sigma\cminus\left\{ j\right\} ,\;\;\; b_{\left|\sigma\right|+2}\cminus\discretediff{\sigma\cminus\left\{ i,j\right\} }{\none}i\right)\\
\left(a_{\left|\sigma\right|+4},b_{\left|\sigma\right|+4}\right) & =\left(\sigma,\hspace{20bp}\;\;\;\;\;\;\;\, b_{\left|\sigma\right|+3}\cminus\discretediff{\sigma\cminus\left\{ j\right\} }{\none}j\right)
\end{align}

The final part of the sequence, another $\left|\sigma\right|$ elements
($\left|\sigma\right|+5\leq n\leq2\left|\sigma\right|+4$), is the
reverse of the first part:
\begin{align}
\left(a_{n},b_{n}\right) & =\left(a_{n-1}\cminus\left\{ \sigma_{\left|\sigma\right|-\left(n-\left(\left|\sigma\right|+5\right)\right)}\right\} ,b_{n-1}\cminus\discretediff{a_{n-1}}{\none}{\sigma_{\left|\sigma\right|-\left(n-\left(\left|\sigma\right|+5\right)\right)}}\right)
\end{align}
So, $a_{2\left|\sigma\right|+4}=\emptyset$. As before, $\function f{a_{n},b_{n}}=\function f{a_{n-1},b_{n-1}}=\function f{\emptyset,\nu}$
for all $n$.

Recall that $\discretediff{\mu}{\none}i=\discretediff{\mu\cminus\left\{ i\right\} }{\none}i$.
This means that 
\begin{equation}
b_{n-1}\cminus b_{n}=b_{\left(2\left|\sigma\right|+4\right)-n}\cminus b_{\left(2\left|\sigma\right|+5\right)-n}
\end{equation}
In other words, the first and last parts of the sequence cancel each
other out, so that
\begin{equation}
b_{\left(2\left|\sigma\right|+4\right)-n}=\nu\cminus\left(b_{\left|\sigma\right|}\cminus b_{\left|\sigma\right|+4}\right)
\end{equation}
and 
\begin{equation}
\left(b_{\left|\sigma\right|}\cminus b_{\left|\sigma\right|+4}\right)=\discretediff{\sigma}{\none}i\cminus\discretediff{\sigma\cminus\left\{ i\right\} }{\none}j\cminus\discretediff{\sigma\cminus\left\{ i,j\right\} }{\none}i\cminus\discretediff{\sigma\cminus\left\{ j\right\} }{\none}j
\end{equation}
Let $\tau_{\sigma}\equiv\left(b_{\left|\sigma\right|}\cminus b_{\left|\sigma\right|+4}\right)$.
Then,
\begin{align}
i\in\tau_{\sigma} & \longleftrightarrow\left(i\in\discretediff{\sigma}{\none}i\xor i\in\discretediff{\sigma\cminus\left\{ i,j\right\} }{\none}i\right)\\
 & \longleftrightarrow\left(\sigma\in S_{z}^{*}\xor\sigma\cminus\left\{ i\right\} \in S_{z}^{*}\xor\sigma\cminus\left\{ j\right\} \in S_{z}^{*}\xor\sigma\cminus\left\{ i,j\right\} \in S_{z}^{*}\right)
\end{align}

Finally, we create a much longer sequence by joining together such
sequences for all $\sigma\seq Y\backslash\left\{ i,j\right\} $. The
final element in this large sequence will be $\left(\emptyset,\nu'\right)$,
where
\begin{equation}
\nu^{'}=\mathop{\cminus}_{\sigma\seq Y\backslash\left\{ i,j\right\} }\tau_{\sigma}
\end{equation}
and
\begin{align}
i\in\nu^{'} & \longleftrightarrow\mathop{\xor}_{\rho\seq Y}\rho\in S_{z}^{*}\\
 & \longleftrightarrow Y\in S_{z}
\end{align}
Since $Y\in S_{z}$ by assumption, $i\in\nu^{'}$. By symmetry, $j\in\nu^{'}$.
Thus,
\begin{equation}
\function f{\emptyset,\nu}=\function f{\emptyset,\nu\cminus\left\{ i,j\right\} }
\end{equation}
which is what we set out to show.

Since $i,j$ were chosen arbitrarily, then for any $\mu,\nu,\rho\seq z$,
where $\left|\rho\right|$ is even,
\begin{equation}
\function f{\emptyset,\nu}=\function f{\mu,\nu\cminus\rho}
\end{equation}

\subsection{Final steps}

The reasoning in the previous subsections can be repeated to show
that, for any $\sigma,\rho,\tau\seq V$, where $\left|\tau\right|$
is even,
\begin{equation}
\function f{\emptyset,\rho}=\function f{\sigma,\rho\cminus\tau}
\end{equation}
Thus, there is some scalar $c$, such that for all $\left(\sigma,\rho\right)\in\pioneerset$,
\begin{equation}
\function f{\sigma,\rho}=c
\end{equation}
The TBIC are thus satisfied. A nonzero solution exists in which all
values are positive ($c>0$), and if any elements are removed from
$\Gamma$, then no non-zero solution exists (because $c$ would have
to equal 0). QED.

\section{Geometric epilogue\label{sec:Geometry and multideviations}}

\subsection{Ordinary probability distributions}

Any probability distribution can be represented as a vector in a vector
space; this allows the set of distributions to be evaluated geometrically.
The unconstrained set of ordinary probability distributions over a
finite, discrete set of cardinality $n$ forms an especially simple
shape, known as a simplex, in $\dsR^{n}$. The multiple-context distributions
of interest in Bell's Theorem form a more complicated polytope, one
whose shape encodes the information in the Bell-type inequalities.
In this section, I will describe how to employ multideviations in
the geometric description of probability distributions.

Given a product set $\Pi A_{B}$, there is a one-to-one correspondence
between real-valued functions over that set and vectors in $\dsR^{n_{B}}$,
where $n_{\sigma}$ is the cardinality function: 
\begin{gather}
\vec{f}=\sum_{\itpx_{B}}f(x)\ \widehat{e}_{\itpx}\\
f(x)=\vec{f}\cdot\widehat{e}_{x}
\end{gather}
where $\{\widehat{e}_{\itpx_{B}}\}_{\itpx_{B}}$ is an orthornormal
basis.

The vectors corresponding to the set of probability distributions
over $\Pi A_{B}$ form a subset of $\dsR^{n_{B}}$, an $\left(n_{B}-1\right)$-dimensional
simplex, the simplest kind of convex polytope (the regular polygons
and regular solids are low-dimensional convex polytopes; triangles
and tetrahedra are examples of simplexes). Each vertex of the simplex
corresponds to a distribution in which one outcome has a probability
of 1, and each facet corresponds to an inequality requiring a particular
probabilty to be greater than zero. Restrictions on the probability
distribution will produce polytopes with more complex shapes.

\subsection{Multidevation vectors}

The MSFs can be used to define multideviation vectors (MD-vectors),
which carve up the vector space into orthogonal subspaces:
\begin{equation}
\MDvector{\sigma}{}{\itpx_{\sigma}}\equiv\sum_{\itpy_{B}}\msf{\sigma}{\none}{\itpx_{\sigma},\itpy_{\sigma}}\hat{e}_{\itpy}
\end{equation}
The MD-vectors decompose the basis vectors: 
\begin{equation}
\hat{e}_{\itpx}=\sum_{\sigma\in\powerset B}\MDvector{\sigma}{}{\itpx_{\sigma}}
\end{equation}
MD-vectors of different order are orthogonal: 
\begin{equation}
\MDvector{\sigma}{}{\itpx_{\sigma}}\cdot\MDvector{\mu}{}{\itpy_{\mu}}=\delta_{\sigma=\mu}\msf{\sigma}{\none}{\itpx_{\sigma},\itpy_{\sigma}}
\end{equation}
The MD-vectors thus carve up the vector space into orthogonal subspaces.
Within each $\sigma$-subspace, the MD-vectors are \emph{not} orthogonal;
indeed, they are not even linearly independent. This is apparent in
the summation property, inherited from the MSFs: 
\begin{equation}
\forall_{i\in\sigma}\left[\sum_{x_{i}}\MDvector{\sigma}{}{\itpx_{\sigma}}=0\right]
\end{equation}

On the other hand, the set of MD-vectors for a given $\sigma$ does
span the $\sigma$-subspace, and the set of $\sigma$-subspaces spans
all of $\dsR^{n_{B}}$. The MD-vectors thus operate as a pseudo-basis;
they can be used to decompose an arbitrary vector, and their inner
product with that vector gives the size of the component: 
\begin{equation}
\vec{v}=\sum_{\sigma\in\powerset B}\sum_{\itpx_{B}}\left(\vec{v}\cdot\MDvector{\sigma}{}{\itpx_{\sigma}}\right)\MDvector{\sigma}{}{\itpx_{\sigma}}\label{Arbitrary vector in MD-representation}
\end{equation}

MD-vectors map functions to vectors:
\begin{equation}
\vec{f}\cdot\MDvector{\sigma}{}{\itpx_{\sigma}}=\multideviation{\sigma}f{\itpx_{\sigma}}{\none}
\end{equation}

For an ordinary probability distribution, then, 
\begin{equation}
\vec{P}=\sum_{\sigma\in\powerset B}\sum_{\itpx_{B}}\MDvector{\sigma}{}{\itpx_{\sigma}}\ \multideviation{\sigma}P{\itpx_{\sigma}}{\none}
\end{equation}
As noted above, the set of all such vectors is an $\left(n_{B}-1\right)$-dimensional
simplex.

\subsection{Multideviation polytopes}

Because the multideviations are segregated into orthogonal subspaces
by their order, we can use them to specify projections into a large
group of subspaces. Such projections will generate new polytopes from
the fundamental simplex. For every $\Psi\seq\powerset B$, the \emph{multideviation
polytope} given by
\begin{equation}
\vec{P}_{\Psi}=\sum_{\sigma\in\Psi}\sum_{\itpx_{B}}\MDvector{\sigma}{}{\itpx_{\sigma}}\ \multideviation{\sigma}P{\itpx_{\sigma}}{\none}
\end{equation}
Each polytope corresponds to the set of distributions that are possible
when certain correlation degrees of freedom are considered accessible.
Finding the facet structure of an arbitrary multideviation polytope
is likely to be an NP-hard problem.

As the projection theorem of section \ref{sec:A-projection-theorem}
shows, the Bell polytopes are a subclass of multideviation polytopes.
In particular, the Bell polytope is a multideviation polytope where
only correlations corresponding to simultaneously realizable joint
measurements are considered accessible.

\subsection{Multiple-context distributions}

For a geometric characterization of multiple-context distributions,
a richer vector space structure is required. There needs to be a separate
vector space for each joint measurement context $\itpp\in\SampleJMS$:
\begin{equation}
\hat{e}_{\itp x}^{\itp p}\cdot\hat{e}_{\itp y}^{\itp q}=\delta_{\itp p=\itp q}\ \delta_{\itp x=\itp y}
\end{equation}
The vectors and distributions are related in a straightforward fashion:
\begin{equation}
\vec{P}=\sum_{\itp p}\sum_{\itp x}P_{\itp p}(\itp x)\ \hat{e}_{\itp x}^{\itp p}\label{MCDistToVec-1}
\end{equation}
 The distribution is recovered from the vector in an equally straightforward
way: 
\begin{equation}
P_{\itp p}(\itp x)=\vec{P}\cdot\hat{e}_{\itp x}^{\itp p}\label{MCVecToDist-1}
\end{equation}

The set of all such vectors describes a convex polytope, although
not a simplex.%
\footnote{It is, rather, the geometric product of a simplex for each joint measurement
context.%
} Once constraints are added, a polytope with a more complicated shape
arises, and the task of finding the tight Bell inequalities is equivalent
to that of finding the facets of the polytope. Generating a description
of a polytope in terms of its facets (known as the $\mathcal{H}$-representation)
given a description in terms of its vertices (the $\mathcal{V}$-representation)
is known as the convex hull problem. For arbitrary polytopes, the
hull problem is known to be NP-complete (see \citealt{Pitowsky:1991}).

\end{document}